\DeclareMathOperator*{\argmin}{arg\,min}
\DeclareMathOperator*{\argmax}{arg\,max}
\newtheorem{theorem}{Theorem}
\newtheorem{assumption}{Assumption}
\newtheorem{corollary}{Corollary}
\newtheorem{lemma}{Lemma}
\newtheorem{definition}{Definition}
\newtheorem{proposition}{Proposition}
\title{Finite-Time Analysis of Minimax Q-Learning for Two-Player Zero-Sum Markov Games: Switching System Approach}
\author{Donghwan Lee% <-this % stops a space
\thanks{D. Lee is with the Department of Electrical Engineering,
KAIST, Daejeon, 34141, South Korea {\tt\small
donghwan@kaist.ac.kr}.}
}
\begin{document}

\maketitle

\begin{abstract}
The objective of this paper is to investigate the finite-time analysis of a Q-learning algorithm applied to two-player zero-sum Markov games. Specifically, we establish a finite-time analysis of both the minimax Q-learning algorithm and the corresponding value iteration method. To enhance the analysis of both value iteration and Q-learning, we employ the switching system model of minimax Q-learning and the associated value iteration. This approach provides further insights into minimax Q-learning and facilitates a more straightforward and insightful convergence analysis. We anticipate that the introduction of these additional insights has the potential to uncover novel connections and foster collaboration between concepts in the fields of control theory and reinforcement learning communities.

\end{abstract}
\begin{IEEEkeywords}
Reinforcement learning, Q-learning, finite-time analysis, convergence, Markov game, switching system
\end{IEEEkeywords}

\section{Introduction}

Reinforcement learning (RL) addresses the problem of optimal sequential decision-making for unknown Markov decision processes through experiences~\cite{sutton1998reinforcement}. Recent successes of RL algorithms surpassing human performance in various challenging tasks have sparked a surge of interest in both the theoretical and experimental aspects of RL algorithms~\cite{mnih2015human,wang2016dueling,lillicrap2016continuous,heess2015memory,van2016deep,bellemare2017distributional,schulman2015trust,schulman2017proximal}. Among many others, Q-learning~\cite{watkins1992q} stands out as one of the most fundamental and popular RL algorithms, with extensive studies conducted on its convergence over the past decades. Classical analysis primarily focuses on asymptotic convergence~\cite{tsitsiklis1994asynchronous,jaakkola1994convergence,borkar2000ode,hasselt2010double,melo2008analysis,lee2020unified,devraj2017zap}. However, recent advancements have been made in finite-time convergence analysis~\cite{szepesvari1998asymptotic,kearns1999finite,even2003learning,azar2011speedy,beck2012error,wainwright2019stochastic,qu2020finite,li2020sample,chen2021lyapunov,lee2020periodic,lee2021discrete}, which quantifies the speed at which iterations progress towards the solution. Most existing results consider Q-learning dynamics as nonlinear stochastic approximations~\cite{kushner2003stochastic} and utilize the contraction property of the Bellman equation. Recently,~\cite{lee2020unified,lee2021discrete} proposed a novel perspective on Q-learning based on continuous-time or discrete-time switching system models~\cite{liberzon2003switching,lin2009stability} and established asymptotic or finite-time analysis using tools from control theory~\cite{chen1995linear,khalil2002nonlinear}. This switching system perspective captures unique characteristics of Q-learning dynamics and enables the conversion of finite-time convergence analysis into stability analysis of dynamic control systems, which will also play an important role in developing main results in this paper.

In this paper, the main goal is to study Q-learning algorithms for the two-player zero-sum Markov game~\cite{shapley1953stochastic}, which is a more general Marokv decision process~\cite{puterman2014markov}, where two decision-making agents coexist and compete with each other. There exist two categories of the two-player Markov games, the alternating two-player Markov game and simultaneous two-player Markov game. In the alternating two-player Markov game, two agents engaged in decision making take turns in selecting actions to maximize and minimize cumulative discounted rewards (referred to as ``return''), respectively. On the other hand, in the simultaneous two-player Markov game, the two agents take actions simultaneously to maximize and minimize the return. Hereafter, these two agents will be called the user and the adversary. The user's primary goal entails maximizing the return, while the adversary strives to hinder the user's progress by minimizing the return. Specifically, in the alternating Markov game, the user initiates its decision at each time step without knowledge of the adversary's action. Afterwards, the adversary can observe the user's action and subsequently take its action based on the observed user's action. Consequently, the adversary holds more advantages over the user in the alternating Markov game. On the other hand, the two agents have fair chances in the simultaneous Markov game. The objective of the Markov game is to determine the pair $(\pi^*,\mu^*)$ of user's optimal policy, $\pi^*$, and adversary's optimal policy $\mu^*$. It is worth noting that although two-player Markov games represent a relatively restricted category within the realm of multi-agent environments, they possess individual significance. Moreover, the two-player Markov game includes Markov decision processes as a special case and serve as an initial stage for delving into the study of more general multi-agent Markov games.
The main objective of this paper is to establish a finite-time analysis of the minimax Q-learning method introduced in~\cite{littman1994markov} for solving two-player zero-sum Markov games. A comprehensive finite-time error analysis of minimax Q-learning as well as the corresponding value iteration is established. To facilitate the analysis of both the value iteration and Q-learning, we employ the switching system models proposed in~\cite{lee2021discrete}.

\paragraph{Contributions}
The main contributions of this paper can be summarized as follows:
\begin{enumerate}
\item This paper presents a finite-time analysis of minimax Q-learning, which has not been explored in the existing literature to the best of the author's knowledge. It is worth noting that the available existing results in the literature primarily focus on aspects such as asymptotic convergence~\cite{littman2001value,fan2019theoretical,zhu2020online} or convergence of modified algorithms~\cite{diddigi2022generalized}.

\item This paper reveals new prospects for the recently developed switching system framework~\cite{lee2021discrete}, which provides additional insights into minimax Q-learning. It also provides a conceptually simpler and more insightful convergence analysis of minimax Q-learning. We expect that the introduction of this additional insight holds the potential to unveil new connections and foster synergy among notions in control and RL communities. Furthermore, it can present additional opportunities for the development and analysis of new or other RL algorithms.
\end{enumerate}

It is important to emphasize that although the switching system model presented in~\cite{lee2021discrete} has been utilized as a foundational tool, the main analysis and specific proof techniques employed in this work substantially differ from those in~\cite{lee2021discrete}. In order to establish our central proof, we have encountered challenges that are far from trivial to overcome. To be more precise, the switching system model presented in~\cite{lee2021discrete} permits a linear comparison system that serves as a lower bound for the original system, playing a pivotal role in the finite-time analysis. Conversely, the switching system model employed in this study exhibits a distinct structure, utilizing the max-min operator in lieu of the standard max operator found in traditional Q-learning. Consequently, the switching system under consideration in this work does not admit a linear comparison system. Hence, the application of similar techniques as those employed in~\cite{lee2021discrete} is not feasible.

Moreover, it is worth noting that this paper only addresses an i.i.d. observation model, where constant step-sizes are employed to simplify the overall analysis. The i.i.d. observation model is commonly utilized in the existing literature, serving as a standard setting~\cite{dalal2018finite,borkar2000ode,melo2008analysis,lee2020unified}. Moreover, the proposed analysis can be extended to encompass the more intricate Markovian observation model, utilizing techniques presented in previous works such as~\cite{srikant2019,bhandari2018finite}. However, extending the analysis to the Markovian observation scenarios can considerably complicate the main analysis and potentially obscure the fundamental ideas and insights of our proposed approach. For this reason, in the interest of maintaining clarity and coherence, this paper will not cover the Markovian observation scenarios.

\paragraph{Related works}

The seminal work by Littman~\cite{littman1994markov} introduced minimax Q-learning, a Q-learning algorithm designed for zero-sum two-player Markov games, which serves as the main focus of our study. Subsequently, Littman and Szepesvari~\cite{littman1996generalized} established the asymptotic convergence of minimax Q-learning towards the optimal value derived from game theory.
Hu and Wellman~\cite{hu1998multiagent} extended minimax Q-learning to multi-agent environments, presenting Nash Q-learning, a variant that addresses general-sum games by incorporating Nash equilibrium computation within the learning rule. Bowling~\cite{bowling2000convergence} elucidated the convergence conditions of the algorithm, while Hu and Wellman~\cite{hu2003nash} examined its convergence behavior and highlighted the restrictive nature of the convergence assumptions. Littman et. al.~\cite{littman2001friend} introduced friend-or-foe Q-learning for general-sum Markov games, demonstrating stronger convergence properties compared to Nash Q-learning. Moreover, Littman~\cite{littman2001value} further examined convergence of Nash Q-learning and its behavior under different environments. Lagoudakis and Parr~\cite{lagoudakis2002value} studied a value iteration version of minimax Q-learning and proposed a least-squares policy iteration algorithm to solve two-player Markov games.

In recent research, Diddigi et al.~\cite{diddigi2022generalized} presented a novel generalized minimax Q-learning algorithm and provided a proof of its asymptotic convergence utilizing stochastic approximation techniques under the assumption of iterates' boundedness. Fan et al.~\cite{fan2019theoretical} extended the minimax Q-learning algorithm by incorporating deep Q-learning techniques~\cite{mnih2015human} and established a finite-time error bound. Zhu and Zhao~\cite{zhu2020online} also employed deep Q-learning techniques in the context of minimax Q-learning and demonstrated its asymptotic convergence in tabular learning scenarios.

Additionally, several notable studies on Markov games, while not directly focused on minimax Q-learning, offer valuable insights. Srinivasan et al.~\cite{srinivasan2018actor} and Perolat et al.~\cite{perolat2018actor} investigated actor-critic algorithms tailored for multi-agent Markov games. Perolat et al.~\cite{perolat2015approximate} delved into an approximate dynamic programming framework for two-player zero-sum Markov games. Furthermore, Perolat et al.~\cite{perolat2016use} explored the generalization of various non-stationary RL algorithms and provide theoretical analyses. Wei et al.~\cite{wei2017online} examined online reinforcement learning algorithms designed for average-reward two-player Markov games.
Lastly, Zhang et al.~\cite{zhang2021multi} presented a comprehensive survey on the multi-agent Markov game and multi-agent reinforcement learning.

It is important to acknowledge that although significant progress has been made through these prior works over the years, the existing literature primarily focuses on aspects such as asymptotic convergence~\cite{littman2001value, fan2019theoretical, zhu2020online} or the convergence of modified algorithms~\cite{diddigi2022generalized}. To the authors' best knowledge, a rigorous finite-time convergence analysis of minimax Q-learning has yet to be thoroughly investigated.

\section{Preliminaries and problem formulation}

\subsection{Notation}
The adopted notation is as follows: ${\mathbb R}$: set of real numbers; ${\mathbb R}^n $: $n$-dimensional Euclidean
space; ${\mathbb R}^{n \times m}$: set of all $n \times m$ real
matrices; $A^T$: transpose of matrix $A$; $A \succ 0$ ($A \prec
0$, $A\succeq 0$, and $A\preceq 0$, respectively): symmetric
positive definite (negative definite, positive semi-definite, and
negative semi-definite, respectively) matrix $A$; $I$: identity matrix with appropriate dimensions; $\lambda_{\min}(A)$ and $\lambda_{\max}(A)$ for any symmetric matrix $A$: the minimum and maximum eigenvalues of $A$; $|{\cal S}|$: cardinality of a finite set $\cal S$; ${\rm tr}(A)$: trace of any matrix $A$; $A \otimes B$: Kronecker’s product of matrices $A$ and $B$.

\subsection{Markov decision problem}
For reference, we first briefly introduce the standard Markov decision problem (MDP)~\cite{puterman2014markov,bertsekas1996neuro}, where a decision making agent sequentially takes actions to maximize cumulative discounted rewards in environments called Markov decision process. A Markov decision process is a mathematical model of dynamical systems with the state-space ${\cal S}:=\{ 1,2,\ldots ,|{\cal S}|\}$ and action-space ${\cal A}:= \{1,2,\ldots,|{\cal A}|\}$. The decision maker selects an action $a \in {\cal A}$ with the current state $s$, then the state
transits to a state $s'$ with probability $P(s,a,s')$, and the transition incurs a
reward $r(s,a,s')$, where $P(s,a,s')$ is the state transition probability from the current state
$s\in {\cal S}$ to the next state $s' \in {\cal S}$ under action $a \in {\cal A}$, and $r(s,a,s')$ is the reward function. For convenience, we consider a deterministic reward function and simply write $r(s_k,a_k ,s_{k + 1}) =:r_k,k \in \{ 0,1,\ldots \}$.

A deterministic policy, $\pi :{\cal S} \to {\cal A}$, maps a state $s \in {\cal S}$ to an action $\pi(s)\in {\cal A}$. The objective of the Markov decision problem (MDP) is to find a deterministic optimal policy, $\pi^*$, such that the cumulative discounted rewards over infinite time horizons is
maximized, i.e.,
\begin{align*}
\pi^*:= \argmax_{\pi\in \Theta} {\mathbb E}\left[\left.\sum_{k=0}^\infty {\gamma^k r_k}\right|\pi\right],
\end{align*}
where $\gamma \in [0,1)$ is the discount factor, $\Theta$ is the set of all admissible deterministic policies, $(s_0,a_0,s_1,a_1,\ldots)$ is a state-action trajectory generated by the Markov chain under policy $\pi$, and ${\mathbb E}[\cdot|\pi]$ is an expectation conditioned on the policy $\pi$. The Q-function under policy $\pi$ is defined as
\begin{align*}
&Q^{\pi}(s,a)={\mathbb E}\left[ \left. \sum_{k=0}^\infty {\gamma^k r_k} \right|s_0=s,a_0=a,\pi \right],\quad s\in {\cal S},a\in {\cal A},
\end{align*}
and the optimal Q-function is defined as $Q^*(s,a)=Q^{\pi^*}(s,a)$ for all $s\in {\cal S},a\in {\cal A}$. Once $Q^*$ is known, then an optimal policy can be retrieved by the greedy policy $\pi^*(s)=\argmax_{a\in {\cal A}}Q^*(s,a)$.

\subsection{Two-player zero-sum Markov game}
In this paper, we consider a two-player zero-sum Markov game, where two decision making agents sequentially take actions to maximize and minimize cumulative discounted rewards (return), respectively. Hereafter, these two agents will be called the user and the adversary. The user's primary goal entails maximizing the return, while the adversary strives to hinder the user's progress by minimizing the return. There exist two categories of the two-player Markov games, the alternating two-player Markov game and simultaneous two-player Markov game. In the alternating two-player Markov game, two agents engaged in decision making take turns in selecting actions to maximize and minimize the return, respectively. On the other hand, in the simultaneous two-player Markov game, the two agents take actions simultaneously to maximize and minimize the return. Specifically, in the alternating Markov game, the user initiates its decision at each time step without knowledge of the adversary's action. Afterwards, the adversary can observe the user's action and subsequently take its action based on the observed user's action. Consequently, the adversary holds more advantages over the user in the alternating Markov game. On the other hand, the two agents have fair chances in the simultaneous Markov game.

In this paper, we mainly focus on the alternating two-player Markov game because it simplifies the overall concepts and derivation processes, and all the results in this paper can be easily extended to the simultaneous Markov games.
In the alternating Markov game, the state-space is ${\cal S}:=\{ 1,2,\ldots ,|{\cal S}|\}$, the action-space of the user is ${\cal A}:= \{1,2,\ldots,|{\cal A}|\}$, and the action-space of the adversary is ${\cal B}:= \{1,2,\ldots,|{\cal B}|\}$. The user selects an action $a \in {\cal A}$ at the current state $s$, and the adversary can observe the user's action $a$, and selects an adversarial decision $b \in {\cal B}$. Then, the state transits to the next state $s'$ with probability $P(s'|s,a,b)$, and the transition incurs a reward $r(s,a,b,s')$, where $P(s'|s,a,b)$ is the state transition probability from the current state
$s\in {\cal S}$ to the next state $s' \in {\cal S}$ under actions $a \in {\cal A}$, $b \in {\cal B}$, and $r(s,a,b,s')$ is the reward function. For convenience, we consider a deterministic reward function and simply write $r(s_k,a_k,b_k ,s_{k + 1}) =:r_k,k \in \{ 0,1,\ldots \}$. The user's stationary deterministic policy, $\pi :{\cal S} \to {\cal A}$, maps a state $s \in {\cal S}$ to an action $\pi(s)\in {\cal A}$. The adversary's stationary deterministic policy, $\mu :{\cal S} \times {\cal A} \to {\cal B}$, maps a state $s \in {\cal S}$ and the user's action $a \in {\cal A}$ to an adversarial action $\mu(s,a)\in {\cal B}$. The user does not have an access to the adversary's action, while the adversary can observe the user's action before making its decision.
It is known that there exists an optimal stationary deterministic policy~\cite{littman1994markov} for both the user and adversary. The objective of the Markov game is to determine the user's optimal policy, denoted as $\pi^*$, and the optimal adversarial policy, denoted as $\mu^*$:
\[
(\pi ^*,\mu^*) : = \argmax _{\pi  \in \Theta } \min _{\mu  \in \Omega } {\mathbb E}\left[ {\left. {\sum\limits_{k = 0}^\infty  {\gamma ^k r_k } } \right|\pi ,\mu } \right],
\]
where $\gamma \in [0,1)$ is the discount factor, $\Theta$ is the set of all admissible deterministic policies of the user, $\Omega$ is the set of all admissible deterministic policies of the adversary, $(s_0,a_0,b_0,s_1,a_1,b_1,\ldots)$ is a state-action trajectory generated under policies $\pi$, $\mu$, and ${\mathbb E}[\cdot|\pi,\mu]$ is an expectation conditioned on the policies $\pi$ and $\mu$.

The Markov game considered in this paper can be potentially applied to the following scenarios:
\begin{enumerate}
\item Adversarial decision making: There exists an intelligent adversary that can take adversarial actions to prevent the user from achieving their goal. Under this situation, the user wants to find an optimal policy that can achieve the best possible performance against the adversarial behaviors.

\item Robust decision making: The environment changes arbitrarily, and the user want to find a robust optimal policy that can achieve the best possible performance in the worst case scenarios.
\end{enumerate}

Some fundamental tools and notions used in Markov decision problem~\cite{bertsekas2015dynamic}, such as the value function and Bellman equation, can be also applied to the two-player Markov game.
In particular, the optimal Q-function is defined as
\[
Q^* (s,a,b): = \max _{\pi  \in \Theta } \min _{\mu  \in \Omega } {\mathbb E}\left[ {\left. {\sum\limits_{k = 0}^\infty  {\gamma ^k r_k } } \right|s_0  = s,a_0  = a,b_0  = b,\pi ,\mu } \right]
\]
which satisfies the optimal Q-Bellman equation
\begin{align}
Q^* (s,a,b) = \underbrace {R(s,a,b) + \gamma \sum\limits_{s' \in S} {P(s'|s,a,b)\max _{a' \in {\cal A}} \min _{b' \in {\cal B}} Q^* (s',a',b')} }_{: = (FQ^* )(s,a,b)}\label{eq:optimal-Q-Bellman-eq}
\end{align}

The corresponding user's stationary optimal policy is given by
\[
\pi ^* (s) = \arg \max _{a \in {\cal A}} \min _{b \in {\cal B}} Q^* (s,a,b)
\]
and, the adversary's stationary optimal policy is
\[
\mu ^* (s,a) = \arg \min _{b \in {\cal B}} Q^* (s,a,b)
\]

Using the optimal Bellman equation~\eqref{eq:optimal-Q-Bellman-eq}, the Q-value iteration (Q-VI) is the recursion
\begin{align}
Q_{k + 1}(s,a,b)  = (FQ_k)(s,a,b),\quad (s,a,b) \in {\cal S} \times {\cal A} \times {\cal B}
\end{align}
with any $Q_0\in {\mathbb R}^{|{\cal S} \times {\cal A} \times {\cal B}|}$. It is known that the Q-VI converges to $Q^*$~\cite{lagoudakis2002value,zhu2020online}.

\subsection{Switching system}

In this paper, the proposed analysis mainly relies on the so-called switching system models~\cite{liberzon2003switching,lin2009stability} in the control community. Therefore, we briefly introduce the notion of switching systems here. Since the switching system is a special form of nonlinear systems~\cite{khalil2002nonlinear}, we first consider the nonlinear system
\begin{align}
x_{k+1}=f(x_k),\quad x_0=z \in {\mathbb R}^n,\quad k\in \{1,2,\ldots \},\label{eq:nonlinear-system}
\end{align}
where $x_k\in {\mathbb R}^n$ represents the state and $f:{\mathbb R}^n \to {\mathbb R}^n$ denotes a nonlinear mapping. An essential concept when dealing with the nonlinear system is the equilibrium point. A point $x^*\in {\mathbb R}^n$ in the state-space is said to be an equilibrium point of~\eqref{eq:nonlinear-system} if it has the property that when the system's state begins at $x^*$, it remains at $x^*$~\cite{khalil2002nonlinear}. For\eqref{eq:nonlinear-system}, the equilibrium points are the real roots of the equation $f(x) = x$. The equilibrium point $x^*$ is said to be globally asymptotically stable if, for any initial state $x_0 \in {\mathbb R}^n$, $x_k \to x^*$ as $k \to \infty$.

Next, let us consider the particular system, called the \emph{linear switching system},
\begin{align}
&x_{k+1}=A_{\sigma_k} x_k,\quad x_0=z\in {\mathbb
R}^n,\quad k\in \{0,1,\ldots \},\label{eq:switched-system}
\end{align}
where $x_k \in {\mathbb R}^n$ is the state, $\sigma\in {\mathcal M}:=\{1,2,\ldots,M\}$ is called the mode, $\sigma_k \in
{\mathcal M}$ is called the switching signal, and $\{A_\sigma,\sigma\in {\mathcal M}\}$ are called the subsystem matrices. The switching signal can be either arbitrary or controlled by the user under a certain switching policy. Especially, a state-feedback switching policy is denoted by $\sigma_k = \sigma(x_k)$. A more general class of systems is the {\em affine switching system}
\begin{align*}
&x_{k+1}=A_{\sigma_k} x_k + b_{\sigma_k},\quad x_0=z\in {\mathbb
R}^n,\quad k\in \{0,1,\ldots \},
\end{align*}
where $b_{\sigma_k} \in {\mathbb R}^n$ is the additional input vector, which also switches according to $\sigma_k$. Due to the additional input $b_{\sigma_k}$, its stabilization becomes much more challenging.

\section{Convergence of Q-VI via switching system model}

In this section, we provide a proof of convergence of Q-VI drawn from switching system models. This approach yields additional perspectives on Q-VI, supplementing the existing analysis in the literature. Moreover, it will serve as a fundamental basis for the convergence analysis of minimax Q-learning in the subsequent sections. To represent Q-VI compactly through the switching system model, we need to introduce some vector and matrix notations.

\subsection{Assumptions and definitions}\label{section:notations}
Throughout the paper, we will use the following compact notations for dynamical system representations of Q-VI:
\begin{align}
R_{a,b} : = \left[ {\begin{array}{*{20}c}
   {R(1,a,b)}  \\
   {R(2,a,b)}  \\
    \vdots   \\
   {R(|{\cal S}|,a,b)}  \\
\end{array}} \right] \in {\mathbb R}^{|{\cal S}|} ,\quad P_{a,b} : = \left[ {\begin{array}{*{20}c}
   {P(1|1,a,b)} & {P(2|1,a,b)} &  \cdots  & {P(|{\cal S}||1,a,b)}  \\
   {P(1|2,a,b)} & {P(2|2,a,b)} &  \cdots  & {P(|{\cal S}||2,a,b)}  \\
    \vdots  &  \vdots  &  \ddots  &  \vdots   \\
   {P(1||{\cal S}|,a,b)} & {P(2||{\cal S}|,a,b)} &  \cdots  & {P(|{\cal S}|||{\cal S}|,a,b)}  \\
\end{array}} \right]\label{eq:R-P}
\end{align}
where $R_{a,b}\in {\mathbb R}^{|{\cal S}|}$ is the expected reward vector conditioned on the action pair $(a,b)\in {\cal A} \times {\cal B}$, and $P_{a,b}\in {\mathbb R}^{|{\cal S}|\times |{\cal S}|}$ is the state transition probability matrix conditioned on the action pair $(a,b)\in {\cal A}\times {\cal B}$. Moreover, let us define the associated notations
\[
P: = \left[ {\begin{array}{*{20}c}
   {P_{1,1} }  \\
    \vdots   \\
   {P_{|{\cal A}|,|{\cal B}|} }  \\
\end{array}} \right] \in {\mathbb R}^{|{\cal S}| \times |{\cal S}\times {\cal A}\times {\cal B}|} ,\quad R: = \left[ {\begin{array}{*{20}c}
   {R_{1,1} }  \\
    \vdots   \\
   {R_{|{\cal A}||{\cal B}|} }  \\
\end{array}} \right] \in {\mathbb R}^{|{\cal S}\times {\cal A}\times {\cal B}|} ,\quad Q: = \left[ {\begin{array}{*{20}c}
   {Q_{1,1} }  \\
    \vdots   \\
   {Q_{|{\cal A}|,|{\cal B}|} }  \\
\end{array}} \right] \in {\mathbb R}^{|{\cal S}\times {\cal A}\times {\cal B}|} ,
\]
where $Q_{a,b}\in {\mathbb R}^{|{\cal S}|}$ is a vector with $[Q_{a,b} ]_s  = Q(s,a,b)$ and $P\in{\mathbb R}^{|{\cal S}\times {\cal A}| \times |{\cal S}|  }$. The Q-function is encoded as a single vector $Q \in {\mathbb R}^{|{\cal S}\times {\cal A}\times {\cal B}|}$, which enumerates $Q(s,a,b)$ for all $s \in {\cal S}$, $a \in {\cal A}$, and $b \in {\cal B}$. The single value $Q(s,a,b)$ can be extracted by
\[
Q(s,a,b) = (e_a  \otimes e_b  \otimes e_s )^T Q,
\]
where $e_s \in {\mathbb R}^{|{\cal S}|}$, $e_a \in {\mathbb R}^{|{\cal A}|}$, and $e_b \in {\mathbb R}^{|{\cal B}|}$ are $s$-th basis vector (all components are $0$ except for the $s$-th component which is $1$), $a$-th basis vector, and $b$-th basis vector, respectively.

For any given $Q \in {\mathbb R}^{|{\cal S} \times {\cal A}\times {\cal B}|}$, let us denote the greedy policy with respect to $Q$ by $i(s,a):=\argmin_{b\in {\cal B}} Q(s,a,b)\in {\cal B}$.
Then, we define the corresponding action transition matrix as
\[
\Gamma_Q  : = \left[ {\begin{array}{*{20}c}
   {e_{i(1,1)}^T  \otimes e_1^T }  \\
   {e_{i(1,2)}^T  \otimes e_2^T }  \\
    \vdots   \\
   {e_{i(|{\cal S}|,|{\cal A}|)}^T  \otimes e_{|{\cal S}||{\cal A}|}^T }  \\
\end{array}} \right] \in {\mathbb R}^{|{\cal S} \times {\cal A}| \times |{\cal S}\times {\cal A} \times {\cal B}|}
\]
where $e_{i(s,a)}  \in {\mathbb R}^{|{\cal B}|}$, and $e_i \in {\mathbb R}^{|{\cal A} \times {\cal S}|}$.
This notation has been introduced in~\cite{wang2007dual}, and it is useful to express Q-VI in the vector and matrix form using the relation
\[
\Gamma_Q  Q = \left[ {\begin{array}{*{20}c}
   {\min _{b \in {\cal B}} Q(1,1,b)}  \\
   {\min _{b \in {\cal B}} Q(1,2,b)}  \\
    \vdots   \\
   {\min _{b \in {\cal B}} Q(|{\cal S}|,|{\cal A}|,b)}  \\
\end{array}} \right] \in {\mathbb R}^{|{\cal S}\times {\cal A}|}
\]

Similarly, for any given $Q' \in {\mathbb R}^{|{\cal S}\times {\cal A}|}$, let us denote the greedy policy with respect to $Q'$ by $i(s): = \arg \max _{a \in A} Q'(s,a) \in A$.
Then, we define the corresponding action transition matrix as
\[
\Pi_Q : = \left[ {\begin{array}{*{20}c}
   { e_{i(1)}^T  \otimes e_1^T }  \\
   {e_{i(2)}^T  \otimes e_2^T }  \\
    \vdots   \\
   {e_{i(|{\cal S}|)}^T  \otimes e_{|{\cal S}|}^T }  \\
\end{array}} \right] \in {\mathbb R}^{|{\cal S}| \times |{\cal S}||{\cal A}|}
\]
where $e_{i(s)}  \in {\mathbb R}^{|{\cal A}|}$ and $e_s \in {\mathbb R}^{|{\cal S}|}$. Then, we can similarly prove that
\[
\Pi_{Q'} Q' = \left[ {\begin{array}{*{20}c}
   {\max _{a \in {\cal A}} Q'(1,a)}  \\
   {\max_{a \in {\cal A}} Q'(2,a)}  \\
    \vdots   \\
   {\max _{a \in {\cal A}} Q'(|{\cal S}|,a)}  \\
\end{array}} \right] \in {\mathbb R}^{|{\cal S}|}
\]

Combining the two notations, one can prove the relation
\[
\Pi _{\Gamma _Q Q} \Gamma _Q Q = \left[ {\begin{array}{*{20}c}
   {\max _{a \in {\cal A}} \min _{b \in {\cal B}} Q(1,a,b)}  \\
   {\max _{a \in {\cal A}} \min _{b \in {\cal B}} Q(2,a,b)}  \\
    \vdots   \\
   {\max _{a \in {\cal A}} \min _{b \in {\cal B}} Q(|{\cal S}|,a,b)}  \\
\end{array}} \right] \in {\mathbb R}^{|{\cal S}|}
\]

Another important property of the notations is that $P\Pi_{\Gamma_Q Q} \Gamma_Q \in {\mathbb R}^{|{\cal S}\times {\cal A}\times {\cal B}| \times |{\cal S}\times {\cal A}\times {\cal B}|}$ is the transition probability matrix of the state-action pair $(s,a,b)$ under the policy $(\pi,\mu)$ where $\pi (s): = \argmax _{a \in {\cal A}} \min_{b \in {\cal B}} Q(s,a,b) \in A$ and $\mu(s,a):=\argmin_{b\in {\cal B}} Q(s,a,b)\in {\cal B}$.

Using these notations, the Bellman equation in~\eqref{eq:optimal-Q-Bellman-eq} can be compactly written as
\begin{align}
Q^*  = \gamma P\Pi _{\Gamma _{Q^* } Q^* } \Gamma _{Q^* } Q^*  + R\label{eq:optimal-Q-Bellman-eq2}
\end{align}

In what follows, an equivalent switching system model, that captures the behavior of Q-VI, is introduced, and based on it, we provide a proof of convergence of Q-VI from the switching system perspective.

\subsection{Convergence of Q-VI via switching system model}

In this section, we study a discrete-time switching system model of Q-VI and establish its finite-time convergence based on the stability analysis of switching systems.
Using the notation introduced in~\cref{section:notations}, the update of Q-VI can be rewritten as
\begin{align}
Q_{k+1}= R+\gamma P\Pi_{\Gamma_{Q_k}Q_k}\Gamma_{Q_k}Q_k ,\label{eq:Q-VI-vector}
\end{align}

Combining~\eqref{eq:Q-VI-vector} and~\eqref{eq:optimal-Q-Bellman-eq2} leads to
\begin{align}
(Q_{k + 1}  - Q^* ) = \underbrace {\gamma P\Pi _{\Gamma _{Q_k } } \Gamma _{Q_k } }_{: = A_{Q_k } }(Q_k  - Q^* ) + \underbrace {\gamma P(\Pi _{\Gamma _{Q_k } Q_k } \Gamma _{Q_k }  - \Pi _{\Gamma _{Q^* } Q^* } \Gamma _{Q^* } )Q^* }_{: = b_{Q_k } }
\label{eq:swithcing-system-form0}
\end{align}
which is a switched affine system where $A_{Q_k}$ and $b_{Q_k}$ switch among matrices from $\{\gamma P\Pi_{\Gamma_Q} \Gamma_Q: Q \in \mathbb{R}^{|{\cal S} \times {\cal A} \times {\cal B}|}\}$ and vectors from $\{\gamma P(\Pi_{\Gamma_Q} \Gamma_Q - \Pi_{\Gamma_{Q^*}} \Gamma_{Q^*})Q^*: Q \in \mathbb{R}^{|{\cal S} \times {\cal A} \times {\cal B}|}\}$ based on the changes of $Q_k$.
Hence, the convergence of Q-VI now relies on analyzing the stability of the aforementioned switching system. The main challenge in proving its stability arises from the presence of the affine term $b_{Q_k}$. Without it, we could easily establish the exponential stability of the corresponding deterministic switching system under any switching policy.
Specifically, we have the following result.
\begin{proposition}\label{prop:stability}
For arbitrary $H_k, k\ge 0$, the linear switching system
\begin{align*}
Q_{k+1} - Q^* &= A_{H_k} (Q_k - Q^*),\quad  Q_0 - Q^*\in {\mathbb R}^{|{\cal S}\times {\cal A}\times {\cal B}|},
\end{align*}
is exponentially stable such that $\|Q_{k+1}- Q^*\|_\infty\le \gamma \|Q_k - Q^*\|_\infty,\quad k \ge 0$, and
\[
\|Q_k- Q^*\|_\infty\le \gamma ^k \|Q_0 - Q^*\|_\infty,\quad k \ge 0,
\]
\end{proposition}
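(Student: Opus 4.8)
The plan is to recognize that each subsystem matrix $A_{H_k}$ equals $\gamma$ times a row-stochastic matrix, and that multiplication by a row-stochastic matrix is non-expansive in the $\ell_\infty$ norm; the factor $\gamma \in [0,1)$ then supplies a uniform contraction that holds in every mode and hence under any switching sequence, whether $H_k$ is arbitrary or state-dependent.

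First I would write $A_{H_k} = \gamma M_{H_k}$ with $M_{H_k} := P\Pi_{\Gamma_{H_k}H_k}\Gamma_{H_k}$ and argue that $M_{H_k}$ is row-stochastic. This follows because $M_{H_k}$ is a product of three row-stochastic matrices: $\Gamma_{H_k}$ and $\Pi_{\Gamma_{H_k}H_k}$ each have exactly one entry equal to $1$ in every row, being assembled from Kronecker products of standard basis vectors, while $P$ is row-stochastic since every block $P_{a,b}$ is a transition probability matrix. Equivalently, I would simply invoke the property established above that $P\Pi_{\Gamma_Q Q}\Gamma_Q$ is the state-action transition matrix under the policy induced by $Q$, which is row-stochastic by construction.

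Next I would record the non-expansiveness bound: for any entrywise-nonnegative $M$ with $M\mathbf{1}=\mathbf{1}$ (the all-ones vector) and any vector $y$, one has $|(My)_i| \le \sum_j M_{ij}|y_j| \le \|y\|_\infty \sum_j M_{ij} = \|y\|_\infty$, so $\|My\|_\infty \le \|y\|_\infty$. Applying this with $y = Q_k - Q^*$ and $M = M_{H_k}$ gives the one-step estimate $\|Q_{k+1}-Q^*\|_\infty = \gamma\|M_{H_k}(Q_k-Q^*)\|_\infty \le \gamma\|Q_k-Q^*\|_\infty$. Iterating this inequality over $k$ then yields $\|Q_k-Q^*\|_\infty \le \gamma^k\|Q_0-Q^*\|_\infty$, which is exactly the claimed exponential stability.

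Because the contraction factor $\gamma$ is identical for every admissible choice of $H_k$, the estimate is insensitive to the switching signal, which is precisely why this affine-free linear system is the easy case. I do not anticipate a genuine obstacle here: the only substantive point is the row-stochasticity of $M_{H_k}$, and that is already available from the matrix notation developed above. The real difficulty, deferred to the analysis of the full system~\eqref{eq:swithcing-system-form0}, is the affine term $b_{Q_k}$, which destroys this clean one-step contraction and cannot be handled by the same sup-norm argument.
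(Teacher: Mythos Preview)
Your proposal is correct and is essentially the same argument the paper uses: the paper observes that $P\Pi_{\Gamma_Q Q}\Gamma_Q$ is a (row-)stochastic matrix, so $\|A_Q\|_\infty\le\gamma$ (\cref{lemma:max-norm-system-matrix}), from which the one-step contraction and the iterated bound follow immediately. Your explicit decomposition $A_{H_k}=\gamma M_{H_k}$ with $M_{H_k}$ row-stochastic is just an unpacking of that induced-norm bound.
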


The above result follows immediately from the key fact that $\|A_{Q} \|_\infty \le \gamma$, which we formally state in the lemma below.
\begin{lemma}\label{lemma:max-norm-system-matrix}
For any $Q \in {\mathbb R}^{|{\cal S}\times {\cal A}\times {\cal B}|}$, $\|A_{Q} \|_\infty \le \gamma$, where the matrix norm  $\| A \|_\infty :=\max_{1\le i \le m} \sum_{j=1}^n {|A_{ij} |}$ and $A_{ij}$ is the element of $A$ in $i$-th row and $j$-th column.
\end{lemma}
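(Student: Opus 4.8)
The plan is to exploit the fact that $A_Q$ is, up to the scalar factor $\gamma$, a product of three row-stochastic matrices, so that its induced $\infty$-norm — which equals the maximum absolute row sum — collapses to exactly $\gamma$. Recall from~\eqref{eq:swithcing-system-form0} that $A_Q = \gamma P\Pi_{\Gamma_Q Q}\Gamma_Q$. The organizing observation is that the product of row-stochastic matrices is again row-stochastic: if two nonnegative matrices $M,N$ satisfy $M\mathbf{1}=\mathbf{1}$ and $N\mathbf{1}=\mathbf{1}$ (where $\mathbf{1}$ is the all-ones vector), then $MN$ is nonnegative and $(MN)\mathbf{1}=M(N\mathbf{1})=M\mathbf{1}=\mathbf{1}$. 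Equivalently, via submultiplicativity of $\|\cdot\|_\infty$ one has $\|A_Q\|_\infty \le \gamma\,\|P\|_\infty\,\|\Pi_{\Gamma_Q Q}\|_\infty\,\|\Gamma_Q\|_\infty$, and each of these three factor norms equals one for a row-stochastic matrix.

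First I would verify that each of the three factors is row-stochastic. The matrix $P$ is formed by vertically stacking the transition matrices $P_{a,b}$ from~\eqref{eq:R-P}, and each row of $P_{a,b}$ is the conditional distribution $P(\cdot|s,a,b)$ over next states; hence every row of $P$ is nonnegative with entries summing to one. For $\Gamma_Q$, each row has the form $e_{i(s,a)}^T\otimes e^T$, a Kronecker product of two standard basis vectors, which is a single one-hot row vector; identically, each row of $\Pi_{\Gamma_Q Q}$ has the form $e_{i(s)}^T\otimes e_s^T$ and is also one-hot. Thus $\Gamma_Q$ and $\Pi_{\Gamma_Q Q}$ are $0$--$1$ matrices carrying exactly one unit entry per row, and are therefore row-stochastic.

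With the three factors all row-stochastic and dimensionally compatible (as laid out in the definitions of \cref{section:notations}, where the sizes of $P$, $\Pi_{\Gamma_Q Q}$, and $\Gamma_Q$ are chosen so that $P\Pi_{\Gamma_Q Q}\Gamma_Q$ is a square matrix of order $|{\cal S}\times{\cal A}\times{\cal B}|$), the product $P\Pi_{\Gamma_Q Q}\Gamma_Q$ is again row-stochastic, so $\|P\Pi_{\Gamma_Q Q}\Gamma_Q\|_\infty=1$. Multiplying by $\gamma$ then yields $\|A_Q\|_\infty=\gamma\le\gamma$, which is in fact the tight form of the claimed bound.

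I do not anticipate a genuine obstacle here, since the statement is a structural consequence of the probabilistic and one-hot nature of the factors. The only points requiring care are the bookkeeping for the Kronecker-product rows of $\Pi_{\Gamma_Q Q}$ and $\Gamma_Q$ — one must confirm that each such row indeed carries a single unit entry, which follows because the Kronecker product of two standard basis vectors is a standard basis vector — and the confirmation that the three matrix dimensions align so that the product is well defined and square. Both are immediate from the definitions in \cref{section:notations}, and together with \cref{prop:stability} this completes the reduction of Q-VI convergence to the stochasticity of these structured matrices.
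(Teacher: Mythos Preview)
Your proposal is correct and follows essentially the same approach as the paper: both arguments reduce to the observation that $P\Pi_{\Gamma_Q Q}\Gamma_Q$ is a row-stochastic matrix, so that every absolute row sum of $A_Q=\gamma P\Pi_{\Gamma_Q Q}\Gamma_Q$ equals $\gamma$. The paper states this in a single line, while you spell out why each factor is row-stochastic (probability rows for $P$, one-hot rows for $\Gamma_Q$ and $\Pi_{\Gamma_Q Q}$) and invoke closure of row-stochastic matrices under multiplication; the substance is the same.
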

\begin{proof}
Note $\sum\limits_j | [A_Q ]_{ij} | = \gamma \sum\limits_j {|[P\Pi _{\Gamma _Q } \Gamma _Q ]_{ij} |}  = \gamma$, which comes from the fact that ${P\Pi _{\Gamma _Q } \Gamma _Q }$ is a stochastic matrix, i.e., its row vector is a stochastic vector. This completes the proof.
\end{proof}

However, due to the presence of the additional affine term $b_{Q_k}$ in the switching system~\eqref{eq:swithcing-system-form0}, it is not immediately evident how to directly obtain its finite-time convergence. To overcome the challenge posed by the affine term, we will utilize two simpler upper and lower bounds, provided below.
\begin{proposition}[Upper and lower bounds]\label{prop:lower-bound2}
For all $k\geq0$, we have
\[
\gamma P\Pi _{\Gamma _{Q_k } Q^* } \Gamma _{Q_k } (Q_k  - Q^* ) \le Q_{k + 1}  - Q^*  \le \gamma P\Pi _{\Gamma _{Q^* } Q_k } \Gamma _{Q^* } (Q_k  - Q^* )
\]
\end{proposition}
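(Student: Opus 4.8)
The plan is to reduce both inequalities to the level of $\mathbb{R}^{|{\cal S}|}$ and then prove them by freezing one player's policy at a time. First I would observe that, since $P$ is entrywise nonnegative and $\gamma \ge 0$, the matrix $\gamma P$ preserves componentwise inequalities; hence, writing $Q_{k+1} - Q^* = \gamma P(\Pi_{\Gamma_{Q_k}Q_k}\Gamma_{Q_k}Q_k - \Pi_{\Gamma_{Q^*}Q^*}\Gamma_{Q^*}Q^*)$ by combining~\eqref{eq:Q-VI-vector} and~\eqref{eq:optimal-Q-Bellman-eq2}, it suffices to establish the two-sided bound $\Pi_{\Gamma_{Q_k}Q^*}\Gamma_{Q_k}(Q_k - Q^*) \le \Pi_{\Gamma_{Q_k}Q_k}\Gamma_{Q_k}Q_k - \Pi_{\Gamma_{Q^*}Q^*}\Gamma_{Q^*}Q^* \le \Pi_{\Gamma_{Q^*}Q_k}\Gamma_{Q^*}(Q_k - Q^*)$ in $\mathbb{R}^{|{\cal S}|}$, after which left-multiplication by $\gamma P$ yields the claim.

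The two elementary facts I would isolate first are the greedy optimality of the $\Gamma$ and $\Pi$ matrices: for the adversary, $\Gamma_Q Q \le \Gamma_{Q'} Q$ for every $Q'$, because $[\Gamma_Q Q]_{(s,a)} = \min_b Q(s,a,b)$ whereas $\Gamma_{Q'}$ evaluates $Q$ at a possibly non-minimizing $b$; and for the user, $\Pi_{V'} V \le \Pi_V V$ for every $V'$, because $\Pi_V$ selects the true $a$-maximizer of $V$. I would also record the monotonicity $\Pi_V V \le \Pi_W W$ whenever $V \le W$, which is just monotonicity of the entrywise $\max_a$, together with the nonnegativity of all the selection matrices.

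For the upper bound I would chain two inequalities through the common matrix $G := \Pi_{\Gamma_{Q^*}Q_k}\Gamma_{Q^*}$. Step one: $\Pi_{\Gamma_{Q_k}Q_k}\Gamma_{Q_k}Q_k \le G Q_k$, obtained from $\Gamma_{Q_k}Q_k \le \Gamma_{Q^*}Q_k$ (adversary optimality) followed by monotonicity of $\max_a$. Step two: $G Q^* \le \Pi_{\Gamma_{Q^*}Q^*}\Gamma_{Q^*}Q^*$, which is exactly user optimality $\Pi_{\Gamma_{Q^*}Q_k}(\Gamma_{Q^*}Q^*) \le \Pi_{\Gamma_{Q^*}Q^*}(\Gamma_{Q^*}Q^*)$. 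Subtracting the second from the first gives the upper bound. The lower bound is the mirror image obtained by interchanging the roles of $Q_k$ and $Q^*$: with $G' := \Pi_{\Gamma_{Q_k}Q^*}\Gamma_{Q_k}$, user optimality gives $G' Q_k \le \Pi_{\Gamma_{Q_k}Q_k}\Gamma_{Q_k}Q_k$, while adversary optimality plus $\max_a$-monotonicity gives $\Pi_{\Gamma_{Q^*}Q^*}\Gamma_{Q^*}Q^* \le G' Q^*$, and subtracting yields the lower bound.

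The main obstacle, and what distinguishes this from the single-max Q-learning bound, is the bookkeeping forced by the coupled $\max_a \min_b$ structure: at each step exactly one player's policy must be frozen so that the corresponding inequality points the right way, and one must verify that the frozen policies assemble into precisely the matrices $G$ and $G'$ appearing in the statement. Concretely, the upper bound requires freezing the adversary at the $Q^*$-greedy policy $\Gamma_{Q^*}$ and the user at the $\Gamma_{Q^*}Q_k$-greedy policy, so that the $Q_k$ value is over-estimated and the $Q^*$ value under-estimated; the lower bound freezes the adversary at $\Gamma_{Q_k}$ instead. Keeping the direction of each min/max inequality consistent with this choice — rather than the computation itself — is the delicate part.
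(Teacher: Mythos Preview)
Your proposal is correct and follows essentially the same approach as the paper: both proofs chain the same two greedy-optimality inequalities (adversary optimality $\Gamma_Q Q \le \Gamma_{Q'}Q$ combined with $\max_a$-monotonicity, and user optimality $\Pi_{V'}V \le \Pi_V V$) to freeze one player's policy at a time and thereby bound $\Pi_{\Gamma_{Q_k}Q_k}\Gamma_{Q_k}Q_k - \Pi_{\Gamma_{Q^*}Q^*}\Gamma_{Q^*}Q^*$ above and below by the common matrices $G$ and $G'$ acting on $Q_k - Q^*$. The only cosmetic difference is that you first factor out the nonnegative matrix $\gamma P$ and work in $\mathbb{R}^{|{\cal S}|}$, whereas the paper keeps $\gamma P$ throughout the chain of inequalities.
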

\begin{proof}
First of all, the lower bound can be derived though the inequalities
\begin{align*}
Q_{k + 1}  - Q^*  =& A_{Q_k } (Q_k  - Q^* ) + b_{Q_k }\\
=& \gamma P\Pi _{\Gamma _{Q_k } Q_k } \Gamma _{Q_k } Q_k  - \gamma P\Pi _{\Gamma _{Q^* } Q^* } \Gamma _{Q^* } Q^*\\
\ge& \gamma P\Pi _{\Gamma _{Q_k } Q_k } \Gamma _{Q_k } Q_k  - \gamma P\Pi _{\Gamma _{Q_k } Q^* } \Gamma _{Q_k } Q^*\\
\ge& \gamma P\Pi _{\Gamma _{Q_k } Q^* } \Gamma _{Q_k } Q_k  - \gamma P\Pi _{\Gamma _{Q_k } Q^* } \Gamma _{Q_k } Q^*\\
=& \gamma P\Pi _{\Gamma _{Q_k } Q^* } \Gamma _{Q_k } (Q_k  - Q^* ),
\end{align*}
where the inequalities come from the definitions of $\Gamma_Q$ and $\Pi_Q$. Similarly, for the upper bound, one gets
\begin{align*}
Q_{k + 1}  - Q^*  =& A_{Q_k } (Q_k  - Q^* ) + b_{Q_k }\\
 =& \gamma P\Pi _{\Gamma _{Q_k } Q_k } \Gamma _{Q_k } Q_k  - \gamma P\Pi _{\Gamma _{Q^* } Q^* } \Gamma _{Q^* } Q^*\\
\le& \gamma P\Pi _{\Gamma _{Q^* } Q_k } \Gamma _{Q^* } Q_k  - \gamma P\Pi _{\Gamma _{Q^* } Q^* } \Gamma _{Q^* } Q^*\\
\le& \gamma P\Pi _{\Gamma _{Q^* } Q_k } \Gamma _{Q^* } Q_k  - \gamma P\Pi _{\Gamma _{Q^* } Q_k } \Gamma _{Q^* } Q^*\\
=& \gamma P\Pi _{\Gamma _{Q^* } Q_k } \Gamma _{Q^* } (Q_k  - Q^* )
\end{align*}
This completes the proof.
\end{proof}

Based on the upper and lower bounds presented in~\cref{prop:lower-bound2}, one can now establish the convergence of Q-VI through the following lemma.
\begin{lemma}\label{thm:1}
We have the following bounds for Q-VI iterates:
\begin{align*}
\left\| {Q_{k+1}  - Q^* } \right\|_\infty   \le \gamma \left\| {Q_k  - Q^* } \right\|_\infty,\quad \forall k \geq 0.
\end{align*}
\end{lemma}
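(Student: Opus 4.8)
The plan is to sandwich $Q_{k+1} - Q^*$ using the two-sided bound from \cref{prop:lower-bound2} and then control the $\infty$-norm of each flanking expression by exploiting the stochasticity of the bounding matrices. First I would invoke \cref{prop:lower-bound2} to write, componentwise,
\[
\gamma P\Pi _{\Gamma _{Q_k } Q^* } \Gamma _{Q_k } (Q_k  - Q^* ) \le Q_{k + 1}  - Q^*  \le \gamma P\Pi _{\Gamma _{Q^* } Q_k } \Gamma _{Q^* } (Q_k  - Q^* ).
\]
The crucial observation is that both bounding matrices, $\gamma P\Pi _{\Gamma _{Q_k } Q^* } \Gamma _{Q_k }$ and $\gamma P\Pi _{\Gamma _{Q^* } Q_k } \Gamma _{Q^* }$, share exactly the structural form of $A_Q = \gamma P\Pi _{\Gamma _Q } \Gamma _Q$: namely $\gamma$ times a product of $P$ with two action-transition (selection) matrices. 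Since $\Pi$ and $\Gamma$ each select a single action per row and $P$ is a transition probability matrix, every such product is row-stochastic, so the argument in the proof of \cref{lemma:max-norm-system-matrix} applies verbatim and yields $\infty$-norm bounded by $\gamma$ for each of the two matrices, even though their defining subscripts are ``mismatched'' relative to $A_{Q_k}$.

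Next I would use the elementary fact that if $M = \gamma S$ with $S$ row-stochastic, then each entry of $Mv$ satisfies $[Mv]_i = \gamma \sum_j S_{ij} v_j$ with $S_{ij} \ge 0$ and $\sum_j S_{ij} = 1$, whence $|[Mv]_i| \le \gamma \|v\|_\infty$; in particular every component of $Mv$ lies in the interval $[-\gamma\|v\|_\infty,\, \gamma\|v\|_\infty]$. Applying this to $v = Q_k - Q^*$ with the upper-bounding matrix shows that each component of the right-hand side is at most $\gamma\|Q_k - Q^*\|_\infty$, and applying it to the lower-bounding matrix shows each component of the left-hand side is at least $-\gamma\|Q_k - Q^*\|_\infty$. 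Chaining these with the componentwise sandwich gives, for every index $i$,
\[
-\gamma \|Q_k - Q^*\|_\infty \le [Q_{k+1} - Q^*]_i \le \gamma \|Q_k - Q^*\|_\infty,
\]
hence $|[Q_{k+1} - Q^*]_i| \le \gamma\|Q_k - Q^*\|_\infty$. Taking the maximum over $i$ yields the claim.

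The main obstacle I anticipate is not any deep estimate but the bookkeeping around the asymmetric bounding matrices and the propagation of signs. One must first confirm that the lower and upper bounds of \cref{prop:lower-bound2}, whose $\Pi$ and $\Gamma$ carry different defining vectors, are still $\gamma$-scaled stochastic matrices so that \cref{lemma:max-norm-system-matrix} continues to apply. One then cannot simply take $\|\cdot\|_\infty$ of the middle term $Q_{k+1}-Q^*$ directly; instead the norm bound must be applied separately to the two flanking expressions and combined componentwise, with the lower bound controlling the negative direction and the upper bound the positive direction. This sign-careful chaining is the only nontrivial point.
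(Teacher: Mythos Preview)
Your proposal is correct and follows essentially the same approach as the paper: invoke the componentwise sandwich from \cref{prop:lower-bound2}, observe that both flanking matrices are $\gamma$ times a row-stochastic matrix so that the argument of \cref{lemma:max-norm-system-matrix} applies, and combine the two one-sided bounds componentwise to control $|[Q_{k+1}-Q^*]_i|$. The paper's write-up performs a sign-based case split on each component (using the lower bound when the entry is nonpositive and the upper bound when it is positive), whereas you bound both flanking vectors directly into $[-\gamma\|Q_k-Q^*\|_\infty,\gamma\|Q_k-Q^*\|_\infty]$ and chain; these are the same argument phrased slightly differently.
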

\begin{proof}
Since $\gamma P\Pi _{\Gamma _{Q_k } Q^* } \Gamma _{Q_k } (Q_k  - Q^* ) \le Q_{k + 1}  - Q^*  \le \gamma P\Pi _{\Gamma _{Q^* } Q_k } \Gamma _{Q^* } (Q_k  - Q^* )$ from~\cref{prop:lower-bound2}, it follows that $(e_a  \otimes e_b  \otimes e_s )^T \gamma P\Pi _{\Gamma _{Q_k } Q^* } \Gamma _{Q_k } (Q_k  - Q^* ) \le (e_a  \otimes e_b  \otimes e_s )^T (Q_{k + 1}  - Q^* ) \le (e_a  \otimes e_b  \otimes e_s )^T \gamma P\Pi _{\Gamma _{Q^* } Q_k } \Gamma _{Q^* } (Q_k  - Q^* )$.

If $(e_a  \otimes e_b  \otimes e_s )^T (Q_{k + 1}  - Q^* ) \le 0$, then $|(e_a  \otimes e_b  \otimes e_s )^T (Q_{k + 1}  - Q^* )| \le |(e_a  \otimes e_b  \otimes e_s )^T \gamma P\Pi _{\Gamma _{Q_k } Q^* } \Gamma _{Q_k } (Q_k  - Q^* )|$, where $e_s \in {\mathbb R}^{|{\cal S}|}$ and $e_a \in {\mathbb R}^{|{\cal A}|}$ are the $s$-th and $a$-th standard basis vectors, respectively. If $(e_a  \otimes e_b  \otimes e_s )^T (Q_{k + 1}  - Q^* ) > 0$, then $|(e_a  \otimes e_b  \otimes e_s )^T (Q_k  - Q^* )| \le |(e_a  \otimes e_b  \otimes e_s )^T \gamma P\Pi _{\Gamma _{Q^* } Q_k } \Gamma _{Q^* } (Q_k  - Q^* )|$. Therefore, one gets
\begin{align*}
&\left\| {Q_{k + 1}  - Q^* } \right\|_\infty   \le \max \left\{ {\left\| {\gamma P\Pi _{\Gamma _{Q_k } Q^* } \Gamma _{Q_k } (Q_k  - Q^* )} \right\|_\infty  ,\left\| {\gamma P\Pi _{\Gamma _{Q^* } Q_k } \Gamma _{Q^* } (Q_k  - Q^* )} \right\|_\infty  } \right\}\\
\le& \max \{ \gamma \left\| {Q_k  - Q^* } \right\|_\infty  ,\gamma \left\| {Q_k  - Q^* } \right\|_\infty  \}\\
=& \gamma \left\| {Q_k  - Q^* } \right\|_\infty  ,
\end{align*}
which completes the proof.
\end{proof}

As a direct consequence of~\cref{thm:1}, convergence of Q-VI can be derived as follows:
\begin{align}
\left\| {Q_k  - Q^* } \right\|_\infty   \le \gamma^k \left\| {Q_0  - Q^* } \right\|_\infty\label{eq:3}
\end{align}

In this section, we have presented a discrete-time switching system model of Q-VI and proved its convergence for alternating two-player zero-sum Markov games. It is important to note that all the derivations in this section can be readily extended to Q-VI for simultaneous two-player zero-sum Markov games. However, for the sake of simplicity in our presentation, we only focus on the alternating case. The presented switching system model serves as the basis for analyzing the minimax Q-learning algorithm in the following sections.

\section{Minimax Q-learning}
\begin{algorithm}[t]
\caption{Minimax Q-learning}
  \begin{algorithmic}[1]
    \State Initialize $Q_0 \in {\mathbb R}^{|{\cal S}\times {\cal A}\times {\cal B}|}$ randomly such that $\left\| {Q_0 } \right\|_\infty   \le 1$.
    \For{iteration $k=0,1,\ldots$}
    	\State Sample $a_k\sim \beta(\cdot|s_k)$. $b_k\sim \phi(\cdot|s_k)$ and $s_k\sim p(\cdot)$
        \State Sample $s_k'\sim P(\cdot|s_k,a_k,b_k)$ and $r_k= r(s_k,a_k,b_k,s_k')$
        \State Update
\[
Q_{k + 1} (s_k ,a_k ,b_k ) = Q_k (s_k ,a_k ,b_k ) + \alpha \left\{ {r_k  + \gamma \max _{a \in {\cal A}} \min _{b \in {\cal B}} Q_k (s_{k'} ,a,b) - Q_k (s_k ,a_k ,b_k )} \right\}
\]

    \EndFor
  \end{algorithmic}\label{algo:standard-Q-learning2}
\end{algorithm}

In this section, we study minimax Q-learning algorithm given in~\cref{algo:standard-Q-learning2} to solve the alternating two-player zero-sum Markov game.
\Cref{algo:standard-Q-learning2} is slightly different from the original minimax Q-learning proposed in~\cite{littman1994markov} for simultaneous two-player Markov games by replacing the max operator over the set of all stochastic policies with the max operator restricted to the discrete action set $\cal A$.
However, it is worth noting that all the analyses presented in this paper remain applicable to the original minimax Q-learning approach for simultaneous two-player Markov games, requiring only minor adjustments.

Through the minimax Q-learning, both the user and adversary can learn their optimal policies. However, we will focus on the user's role in this paper.
We will address the following scenario: while learning, the user has an access to the adversary's action $b\in \cal B$, which is generated by an exploratory behavior policy, meaning that the adversary does not intervene and disrupt the user.
On the other hand, after the learning period, the adversary intervenes and hides its decision to the user. Once $Q^*$ is found, the user takes action according to $\pi ^* (s): = \argmax _{a \in {\cal A}} \min _{b \in {\cal B}} Q^* (s,a,b)$, and it leads to the best performance against the optimal adversary behaviors.

In~\cref{algo:standard-Q-learning2}, we consider a constant step-size $\alpha \in (0,1)$, and assume that  $\{(s_k,a_k,b_k,s_k')\}_{k=0}^{\infty}$ are i.i.d. samples under the behavior policies $\beta$ and $\phi$, where the behavior policy is the policy by which the RL agent actually behaves to collect experiences. For simplicity, we assume that the state at each time is sampled from the stationary state distribution $p$, and in this case, the state-action distribution at each time is identically given by
\[
d(s,a,b) = p(s)\beta (a|s)\phi (b|s),\quad (s,a,b) \in {\cal S} \times {\cal A} \times {\cal B}.
\]

Throughout, we make the following assumptions.
\begin{assumption}\label{assumption:positive-distribution}
%=======================================================================================
$d(s,a,b)> 0$ holds for all $s\in {\cal S},a \in {\cal A},b \in {\cal B}$.
%=======================================================================================
\end{assumption}
\begin{assumption}\label{assumption:step-size}
The step-size is a constant $\alpha \in (0,1)$.
\end{assumption}
\begin{assumption}[Unit bound on rewards]\label{assumption:bounded-reward} The reward is bounded as follows:
\begin{align*}
\max _{(s,a,b,s') \in {\cal S} \times {\cal A}\times {\cal B}  \times {\cal S}} |r (s,a,b,s')|\leq 1.
\end{align*}
\end{assumption}
\begin{assumption}[Unit bound on initial parameters]\label{assumption:bounded-Q0} The initial iterate $Q_0$ satisfies $\left\| {Q_0 } \right\|_\infty   \le 1$.
\end{assumption}

The above assumptions are crucial for the proposed finite-time analysis. \cref{assumption:positive-distribution} ensures that every state-action pair can be visited infinitely often, facilitating sufficient exploration, which is a standard assumption in the literature~\cite{bertsekas1996neuro}.
This assumption can be used when the state-action occupation frequency is given, and has been also considered in~\cite{li2020sample} and~\cite{chen2021lyapunov}.
\cite{beck2012error} considers another exploration condition, called the cover time condition, which states that there exists a certain time period, within which all the state-action pair is expected to be visited at least once.
Slightly different cover time conditions have been used in~\cite{even2003learning} and~\cite{li2020sample} for convergence rate analysis.
\cref{assumption:bounded-reward} and~\cref{assumption:bounded-Q0} impose unit bounds on the reward function and the initial iterate $Q_0$, and are introduced for the sake of simplicity in analysis, without sacrificing generality.
The constant step-size in~\cref{assumption:step-size} has been also studied in~\cite{beck2012error} and~\cite{chen2021lyapunov} using different approaches.
The following quantities will be frequently used in this paper; hence, we define them for convenience.
\begin{definition}
\begin{enumerate}
\item Maximum state-action occupation frequency:
\[
d_{\max} := \max_{(s,a,b)\in {\cal S} \times {\cal A}\times {\cal B}} d(s,a,b) \in (0,1).
\]

\item Minimum state-action occupation frequency:
\[
d_{\min}:= \min_{(s,a,b) \in {\cal S} \times {\cal A} \times {\cal B}} d(s,a,b) \in (0,1).
\]

\item Exponential decay rate:
\begin{align}\label{eq:rho}
    \rho:=1 - \alpha d_{\min} (1-\gamma).
\end{align}

\end{enumerate}
\end{definition}
It can be proven that under~\cref{assumption:step-size}, the decay rate satisfies $\rho \in (0,1)$.
The reason behind referring to $\rho$ as an exponential decay rate is that the finite-time error bound, which will be derived in the remaining part of this paper, decays exponentially at the rate of $\rho$.

Similar to Q-VI, we will represent minimax Q-learning in~\cref{algo:standard-Q-learning2} as a switching system model in this section. However, a key distinction lies in the fact that~\cref{algo:standard-Q-learning2} can be viewed as a stochastic Q-VI, where the Q-function for each state-action pair is updated asynchronously through stochastic state-action pair explorations. Therefore, it becomes essential to incorporate the state-action occupation frequency, which is linked to exploration, into the switching system model. Specifically, the state-action occupation frequency is encoded using the following matrix notations:
\[
D_{a,b} : = \left[ {\begin{array}{*{20}c}
   {d(1,a,b)} & {} & {}  \\
   {} &  \ddots  & {}  \\
   {} & {} & {d(|S|,a,b)}  \\
\end{array}} \right] \in {\mathbb R}^{|{\cal S}| \times |{\cal S}|} ,\quad D: = \left[ {\begin{array}{*{20}c}
   {D_{1,1} } & {} & {}  \\
   {} &  \ddots  & {}  \\
   {} & {} & {D_{|{\cal A}|,|{\cal B}|} }  \\
\end{array}} \right] \in {\mathbb R}^{|{\cal S}\times {\cal A}\times {\cal B}| \times |{\cal S}\times {\cal A}\times {\cal B}|} .
\]
Note also that under~\cref{assumption:positive-distribution}, $D$ is a nonsingular diagonal matrix with strictly positive diagonal elements.

In our analysis, the boundedness of Q-learning iterates~\cite{gosavi2006boundedness} plays an important role in our analysis.
\begin{lemma}[Boundedness of Q-learning iterates~\cite{gosavi2006boundedness}]\label{lemma:bounded-Q}
If the step-size is less than one, then for all $k \ge 0$,
\begin{align*}
\left\| {Q_k } \right\|_\infty   \le Q_{\max } : = \frac{{\max \{ 1,\left\| {Q_0 } \right\|_\infty  \} }}{{1 - \gamma }}.
\end{align*}
\end{lemma}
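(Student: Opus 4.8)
The plan is to prove the bound by induction on $k$, exploiting two structural features of the update in \cref{algo:standard-Q-learning2}: only the single sampled coordinate $(s_k,a_k,b_k)$ is modified at each step, and on that coordinate the update is a convex combination of the old value and a one-step target. The base case $k=0$ is immediate, since $\gamma\in[0,1)$ gives $1-\gamma\le 1$, hence $\|Q_0\|_\infty \le \max\{1,\|Q_0\|_\infty\} \le \frac{\max\{1,\|Q_0\|_\infty\}}{1-\gamma} = Q_{\max}$. For the inductive step I would assume $\|Q_k\|_\infty \le Q_{\max}$ and establish the same bound for $Q_{k+1}$, treating separately the coordinates that are updated and those that are not.

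For every coordinate $(s,a,b)\neq(s_k,a_k,b_k)$ the value is unchanged, so $|Q_{k+1}(s,a,b)| = |Q_k(s,a,b)| \le Q_{\max}$ by the induction hypothesis. The only coordinate requiring work is the active one, where I would rewrite the update as the convex combination
\[
Q_{k+1}(s_k,a_k,b_k) = (1-\alpha)\,Q_k(s_k,a_k,b_k) + \alpha\left\{ r_k + \gamma \max_{a \in {\cal A}} \min_{b \in {\cal B}} Q_k(s_k',a,b)\right\}.
\]
Since $\alpha\in(0,1)$, both weights $1-\alpha$ and $\alpha$ are nonnegative, so the triangle inequality yields
\[
|Q_{k+1}(s_k,a_k,b_k)| \le (1-\alpha)\,|Q_k(s_k,a_k,b_k)| + \alpha\left( |r_k| + \gamma\left| \max_{a \in {\cal A}} \min_{b \in {\cal B}} Q_k(s_k',a,b)\right|\right).
\]

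The two ingredients I would then invoke are \cref{assumption:bounded-reward}, giving $|r_k|\le 1$, and the elementary coordinate bound $\left|\max_{a}\min_{b} Q_k(s_k',a,b)\right| \le \max_{a,b}|Q_k(s_k',a,b)| \le \|Q_k\|_\infty \le Q_{\max}$, where the last step is the induction hypothesis. Substituting these, together with $|Q_k(s_k,a_k,b_k)|\le Q_{\max}$, gives $|Q_{k+1}(s_k,a_k,b_k)| \le (1-\alpha)Q_{\max} + \alpha(1+\gamma Q_{\max})$. It then remains to verify the purely algebraic inequality $(1-\alpha)Q_{\max} + \alpha(1+\gamma Q_{\max}) \le Q_{\max}$, which, after rearranging and dividing by $\alpha>0$, reduces to $1 \le (1-\gamma)Q_{\max} = \max\{1,\|Q_0\|_\infty\}$, and this holds by the definition of $Q_{\max}$.

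I do not expect any serious obstacle; essentially all the content lies in identifying the correct self-consistent value of the bound. The quantity $Q_{\max}$ is precisely the fixed point of the scalar map $q\mapsto (1-\alpha)q + \alpha(1+\gamma q)$, i.e.\ the solution of $q = 1 + \gamma q$, inflated so as to also dominate $\|Q_0\|_\infty$; once this is recognized, the convex-combination structure (which crucially uses $\alpha\le 1$ so that $1-\alpha\ge 0$) makes the induction close automatically. The only minor point to keep in mind is that the $\max$--$\min$ operator, exactly like the plain $\max$ operator of ordinary Q-learning, satisfies the coordinatewise bound used above, so replacing $\max$ by $\max\min$ does not alter the argument.
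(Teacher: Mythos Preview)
Your proof is correct, and in fact a little cleaner than the paper's. The key difference is in the induction hypothesis: you induct directly on the target bound $\|Q_k\|_\infty \le Q_{\max}$ and exploit that $Q_{\max}$ is the fixed point of the scalar map $q\mapsto 1+\gamma q$, so the convex-combination estimate closes in one step. The paper instead inducts on the sharper, time-varying bound $\|Q_k\|_\infty \le (1+\gamma+\cdots+\gamma^k)\max\{1,\|Q_0\|_\infty\}$ and only at the end passes to the limit via the geometric series. Your route is shorter and conceptually more transparent; the paper's route yields a slightly tighter finite-$k$ estimate (the partial geometric sum rather than its limit), which is never actually used downstream. Both rely on the same two ingredients---$\alpha\in(0,1)$ to make the update a genuine convex combination, and \cref{assumption:bounded-reward} for $|r_k|\le 1$---and both go through unchanged when $\max$ is replaced by $\max\min$, as you note.
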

From~\cref{assumption:bounded-Q0}, we can easily see that $Q_{\max}\leq\frac{1}{1-\gamma}$.
The boundedness has been established for standard Q-learning in~\cite{gosavi2006boundedness}, but not for minimax Q-learning. For this reason, we provide its proof in Appendix~\ref{appendix:1} for the completeness of the presentation.

\subsection{Minimax Q-learning as a stochastic affine switching system}
Using the notation introduced, the update in~\cref{algo:standard-Q-learning2} can be rewritten as
\begin{align}
Q_{k + 1}  = Q_k  + \alpha \{ DR + \gamma DP\Pi _{\Gamma _{Q_k } Q_k } \Gamma _{Q_k } Q_k  - DQ_k  + w_k \},\label{eq:1}
\end{align}
where
\begin{align}
w_k  =& (e_{a_k }  \otimes e_{b_k }  \otimes e_{s_k } )r_k  + \gamma (e_{a_k }  \otimes e_{b_k }  \otimes e_{s_k } )(e_{s_{k'} } )^T \Pi _{\Gamma _{Q_k } Q_k } \Gamma _{Q_k } Q_k\nonumber\\
&- (e_{a_k }  \otimes e_{b_k }  \otimes e_{s_k } )(e_{a_k }  \otimes e_{b_k }  \otimes e_{s_k } )^T Q_k  - (DR + \gamma DP\Pi _{\Gamma _{Q_k } Q_k } \Gamma _{Q_k } Q_k  - DQ_k ),\label{eq:w}\\
=& (e_{a_k }  \otimes e_{b_k }  \otimes e_{s_k } )\delta _k  - (DR + \gamma DP\Pi _{\Gamma _{Q_k } Q_k } \Gamma _{Q_k } Q_k  - DQ_k ),\nonumber
\end{align}
is the stochastic noise, where all randomness in~\cref{algo:standard-Q-learning2} is encoded into a single vector,
$(s_k,a_k,b_k,r_k,s_k')$ is the sample in the $k$-th time-step, and $\delta _k : = r_k  + \gamma (e_{s_{k'} } )^T \Pi _{\Gamma _{Q_k } Q_k } \Gamma _{Q_k } Q_k  - (e_{a_k }  \otimes e_{b_k }  \otimes e_{s_k } )^T Q_k$ is called the TD-error.

Moreover, by definition, the noise term has a zero mean conditioned on $Q_k$, i.e., ${\mathbb E}[w_k|Q_k]=0$.
Invoking the optimal Bellman equation $(\gamma DP\Pi _{\Gamma _{Q^* } Q^* } \Gamma _{Q^* }  - D)Q^*  + DR = 0$ in~\eqref{eq:optimal-Q-Bellman-eq2}, \eqref{eq:1} can be further rewritten by
\begin{align}
(Q_{k + 1}  - Q^* ) = \underbrace {\{ I + \alpha (\gamma DP\Pi _{\Gamma _{Q_k } Q_k } \Gamma _{Q_k }  - D)\} }_{: = A_{Q_k } }(Q_k  - Q^* ) + \underbrace {\alpha \gamma DP(\Pi _{\Gamma _{Q_k } Q_k } \Gamma _{Q_k }  - \Pi _{\Gamma _{Q^* } Q^* } \Gamma _{Q^* } )Q^* }_{: = b_{Q_k } } + \alpha w_k .\label{eq:Q-learning-stochastic-recursion-form}
\end{align}
which is a linear switching system with an extra affine term, $b_{Q_k}:=\alpha \gamma DP(\Pi _{\Gamma _{Q_k } Q_k } \Gamma _{Q_k }  - \Pi _{\Gamma _{Q^* } Q^* } \Gamma _{Q^* } )Q^*$, and stochastic noise $\alpha w_k$.
Using the notation, the minimax Q-learning iteration can be concisely represented as the \emph{stochastic affine switching system}
\begin{align}
Q_{k + 1}  - Q^*  = A_{Q_k } (Q_k  - Q^* ) + b_{Q_k }  + \alpha w_k,\label{eq:swithcing-system-form}
\end{align}

Therefore, the convergence of minimax Q-learning can be reduced to analyzing the stability of the above affine switching system.
However, proving its stability poses a significant challenge due to the presence of affine and stochastic terms.
In the absence of these terms, we can establish the exponential stability of the corresponding deterministic switching system, irrespective of the switching policy (as demonstrated in~\cref{prop:stability}). However,~\eqref{eq:swithcing-system-form} includes additional affine terms and stochastic noises, making it unclear how to derive its finite-time convergence directly. To address this issue, we employ two simpler comparison systems that bound the trajectories of the original system and are more amenable to analysis.
The construction of these comparison systems draws inspiration from~\cite{lee2020unified} and~\cite{lee2021discrete}, capitalizing on the unique structure of the Q-learning algorithm.
Unlike previous works in~\cite{lee2020unified}, our focus lies in the discrete-time domain and finite-time analysis.
Moreover,~\cite{lee2021discrete} presents a finite-time analysis of standard Q-learning through the framework of discrete-time switching system models.
In contrast to~\cite{lee2021discrete}, the discrete-time switching system model in this paper exhibits a distinct structure including the min operator in its updates.
Consequently, establishing finite-time convergence becomes considerably more challenging. In other words, it is not feasible to adopt analogous approaches as those outlined in~\cite{lee2021discrete}, which will be elaborated in the remaining parts of this paper.
In the following subsections, we will present the two comparison systems, called the upper and lower comparison systems.

\subsection{Lower comparison system}
Let us consider the following stochastic switching system:
\begin{align}
(Q_{k + 1}^L  - Q^* ) = (I + \alpha \{ \gamma DP\Pi _{\Gamma _{Q^* } Q^* } \Gamma _{Q_k^L  - Q^* }  - D\} )(Q_k^L  - Q^* ) + \alpha w_{k},\quad Q_0^L-Q^*\in {\mathbb R}^{|{\cal S}\times {\cal A}\times {\cal B}|}, \label{eq:lower-system}
\end{align}
where the stochastic noise $w_{k}$ in the lower comparison system is the same as that in the original system~\eqref{eq:Q-learning-stochastic-recursion-form}. We refer to this system as the \emph{lower comparison system}. It is important to note that, unlike the original system~\eqref{eq:swithcing-system-form}, the above system does not include the affine term. Furthermore, its main property is that if $Q_0^L - Q^* \le Q_0-Q^*$ initially, then $Q_k^L - Q^* \le Q_k-Q^*$ holds for all $k \ge 0$.
\begin{proposition}\label{prop:lower-bound2}
Suppose $Q_0^L- Q^*\le Q_0 - Q^*$, where $\le$ is used as the element-wise inequality. Then,
$$Q_k^L- Q^*\le Q_k-Q^*,$$
for all $k\geq0$.
\end{proposition}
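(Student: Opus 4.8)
The plan is to argue by induction on $k$, with the base case $k=0$ being exactly the hypothesis $Q_0^L - Q^* \le Q_0 - Q^*$. For the inductive step I would assume $Q_k^L - Q^* \le Q_k - Q^*$ holds element-wise and try to propagate this one step through each recursion. The first useful observation is that the stochastic noise $\alpha w_k$ enters identically in~\eqref{eq:Q-learning-stochastic-recursion-form} and~\eqref{eq:lower-system}, so it cancels upon subtracting the two updates; the comparison is therefore purely deterministic and the randomness plays no role in this proposition.

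Next I would rewrite the one-step map of the original system so the affine term is absorbed. Writing $G_k := \Pi_{\Gamma_{Q_k}Q_k}\Gamma_{Q_k}$ and $G_* := \Pi_{\Gamma_{Q^*}Q^*}\Gamma_{Q^*}$ and expanding $A_{Q_k}(Q_k - Q^*) + b_{Q_k}$, the cross terms $\pm\,\alpha\gamma DP\,G_k Q^*$ telescope, yielding
\[
Q_{k+1} - Q^* = (I - \alpha D)(Q_k - Q^*) + \alpha\gamma DP(G_k Q_k - G_* Q^*) + \alpha w_k,
\]
while the lower system is $Q_{k+1}^L - Q^* = (I-\alpha D)(Q_k^L - Q^*) + \alpha\gamma DP\,\Pi_{\Gamma_{Q^*}Q^*}\Gamma_{Q_k^L - Q^*}(Q_k^L - Q^*) + \alpha w_k$. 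Subtracting, the difference splits into a diagonal part premultiplied by $I - \alpha D$ and a transition part premultiplied by $\alpha\gamma DP$. Both prefactors are element-wise nonnegative: $I - \alpha D$ is diagonal with entries $1-\alpha d(s,a,b)\in(0,1)$ under \cref{assumption:step-size}, and $DP$ is the product of a nonnegative diagonal matrix with a row-stochastic matrix. Hence the diagonal part is nonnegative by the induction hypothesis, and it remains only to show that the $\mathbb{R}^{|\mathcal{S}|}$ vector $G_k Q_k - G_* Q^* - \Pi_{\Gamma_{Q^*}Q^*}\Gamma_{Q_k^L - Q^*}(Q_k^L - Q^*)$ is nonnegative.

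This last inequality is the crux. Writing $\pi^*(s) = \argmax_a \min_b Q^*(s,a,b)$, the selection $\Pi_{\Gamma_{Q^*}Q^*}$ extracts the coordinate $a=\pi^*(s)$, so the subtracted vector has $s$-th entry $\min_b (Q_k^L - Q^*)(s,\pi^*(s),b)$. I would then chain three elementary facts entrywise: (i) sub-optimality of $\pi^*$ for $Q_k$, namely $\max_a\min_b Q_k(s,a,b) \ge \min_b Q_k(s,\pi^*(s),b)$, together with $\max_a\min_b Q^*(s,a,b) = \min_b Q^*(s,\pi^*(s),b)$; (ii) superadditivity of the min, $\min_b g(b) - \min_b h(b) \ge \min_b(g(b)-h(b))$, applied to $g=Q_k(s,\pi^*(s),\cdot)$ and $h=Q^*(s,\pi^*(s),\cdot)$; and (iii) the induction hypothesis, which lowers $\min_b(Q_k - Q^*)(s,\pi^*(s),b)$ to $\min_b(Q_k^L - Q^*)(s,\pi^*(s),b)$. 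Composing (i)--(iii) gives exactly $G_k Q_k - G_* Q^* \ge \Pi_{\Gamma_{Q^*}Q^*}\Gamma_{Q_k^L - Q^*}(Q_k^L - Q^*)$, closing the induction.

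I expect the main obstacle to be steps (ii)--(iii) of the crux: the lower system freezes the user's $\max$-selection at the optimal $\pi^*$ (through $\Pi_{\Gamma_{Q^*}Q^*}$) while letting the adversary's $\min$-selection float with the running iterate (through $\Gamma_{Q_k^L - Q^*}$), and this mixed selection is exactly the min-max feature absent from the max-only comparison system in~\cite{lee2021discrete}. It forces one to lower-bound the \emph{min of a difference} rather than a plain coordinate difference, which is why superadditivity of the min is needed. The delicate point is verifying that this particular frozen-$\pi^*$ choice is simultaneously what makes the telescoping in the first display go through and what keeps the resulting vector a genuine lower bound; the remaining sign-checking of nonnegative matrices is routine.
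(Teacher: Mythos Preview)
Your proposal is correct and follows essentially the same approach as the paper: induction on $k$, with the inductive step driven by peeling off the $\max$- and $\min$-selections one at a time via their optimality and then invoking the induction hypothesis together with the nonnegativity of $I-\alpha D$ and $\alpha\gamma DP$. The paper writes the chain at the matrix level, first replacing $\Gamma_{Q^*}$ by $\Gamma_{Q_k}$ in the $Q^*$-term, then $\Pi_{\Gamma_{Q_k}Q_k}$ by $\Pi_{\Gamma_{Q^*}Q^*}$, factoring to $(I+\alpha\{\gamma DP\Pi_{\Gamma_{Q^*}Q^*}\Gamma_{Q_k}-D\})(Q_k-Q^*)$, applying the hypothesis, and finally swapping $\Gamma_{Q_k}$ for $\Gamma_{Q_k^L-Q^*}$; your scalar route through ``superadditivity of the min'' and the intermediate selection $\Gamma_{Q_k-Q^*}$ is a reordering of the same inequalities, not a genuinely different argument.
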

\begin{proof}
The proof is done by an induction argument. Suppose the result holds for some $k \ge 0$. Then,
\begin{align*}
(Q_{k+1}- Q^*)=& (Q_k  - Q^* ) + \alpha D\{ \gamma P\Pi _{\Gamma _{Q_k } Q_k } \Gamma _{Q_k } Q_k  - \gamma P\Pi _{\Gamma _{Q^* } Q^* } \Gamma _{Q^* } Q^*  - Q_k  + Q^* \}  + \alpha w_{k}\\
\ge& (Q_k  - Q^* ) + \alpha D\{ \gamma P\Pi _{\Gamma _{Q_k } Q_k } \Gamma _{Q_k } Q_k  - \gamma P\Pi _{\Gamma _{Q^* } Q^* } \Gamma _{Q_k } Q^*  - Q_k  + Q^* \}  + \alpha w_{k}\\
\ge& (Q_k  - Q^* ) + \alpha D\{ \gamma P\Pi _{\Gamma _{Q^* } Q^* } \Gamma _{Q_k } Q_k  - \gamma P\Pi _{\Gamma _{Q^* } Q^* } \Gamma _{Q_k } Q^*  - Q_k  + Q^* \}  + \alpha w_{k}\\
=& (I + \alpha \{ \gamma DP\Pi _{\Gamma _{Q^* } Q^* } \Gamma _{Q_k }  - D\} )(Q_k  - Q^* ) + \alpha w_{k}\\
\ge& (I + \alpha \{ \gamma DP\Pi _{\Gamma _{Q^* } Q^* } \Gamma _{Q_k }  - D\} )(Q_k^L  - Q^* ) + \alpha w_{k}\\
\ge& (I + \alpha \{ \gamma DP\Pi _{\Gamma _{Q^* } Q^* } \Gamma _{Q_k^L  - Q^* }  - D\} )(Q_k^L  - Q^* ) + \alpha w_{k}\\
=& Q_{k+1}^L-Q^*,
\end{align*}
where the third inequality is due to the hypothesis $Q_k^L- Q^*\le Q_k-Q^*$ and the fact that $A_{Q^*}$ is a positive matrix (all elements are nonnegative). The proof is completed by induction.
\end{proof}

\subsection{Upper comparison system}
Now, let us introduce the stochastic linear switching system
\begin{align}
(Q_{k + 1}^U  - Q^* ) = (I + \alpha \{ \gamma DP\Pi _{\Gamma _{Q^* } (Q_k^U  - Q^* )} \Gamma _{Q^* }  - D\} )(Q_k^U  - Q^* ) + \alpha w_{k},\quad Q_0^U-Q^*\in {\mathbb R}^{|{\cal S}\times {\cal A}\times {\cal B}|},
\label{eq:upper-system}
\end{align}
 where the stochastic noise $w_k$ is kept the same as the original system. We will call it the \emph{upper comparison system}.
 Similar to the lower comparison system~\eqref{eq:lower-system}, the above system does not include the affine term.
 Moreover, if $Q_0^U-Q^*\ge Q_0-Q^*$ initially, then $Q_k^U-Q^*\ge Q_k-Q^*$ for all $k \ge 0$.
\begin{proposition}
Suppose $Q_0^U-Q^*\ge Q_0-Q^*$, where $\geq $ is used as the element-wise inequality. Then,
$$Q_k^U-Q^*\ge Q_k-Q^*,$$
for all $k \ge 0$.
\end{proposition}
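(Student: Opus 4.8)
The plan is to run the same induction used for the lower comparison system \eqref{eq:lower-system}, but with every inequality reversed so that the trajectory of the original system \eqref{eq:swithcing-system-form} is now dominated \emph{from above} by that of \eqref{eq:upper-system}. The base case $k=0$ is exactly the hypothesis $Q_0^U - Q^* \ge Q_0 - Q^*$, so I would assume $Q_k^U - Q^* \ge Q_k - Q^*$ and establish the claim for $k+1$.

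First I would rewrite \eqref{eq:Q-learning-stochastic-recursion-form} by collecting $A_{Q_k}(Q_k-Q^*)+b_{Q_k}$ into a single bracket, giving $Q_{k+1}-Q^* = (Q_k-Q^*) + \alpha D\{\gamma P\Pi_{\Gamma_{Q_k}Q_k}\Gamma_{Q_k}Q_k - \gamma P\Pi_{\Gamma_{Q^*}Q^*}\Gamma_{Q^*}Q^* - (Q_k-Q^*)\} + \alpha w_k$, precisely as in the lower-system proof. The heart of the argument is an \emph{upper} bound on that bracket, obtained by mimicking the chain that produced the deterministic Q-VI upper bound: since $\Gamma_{Q^*}Q_k$ evaluates $Q_k$ at the $Q^*$-greedy minimizing action we have $\Pi_{\Gamma_{Q_k}Q_k}\Gamma_{Q_k}Q_k \le \Pi_{\Gamma_{Q^*}Q_k}\Gamma_{Q^*}Q_k$, and since the optimal max-min value of $Q^*$ dominates the value of any other maximizing action evaluated on $Q^*$, these two inequalities combine to $\gamma P\Pi_{\Gamma_{Q_k}Q_k}\Gamma_{Q_k}Q_k - \gamma P\Pi_{\Gamma_{Q^*}Q^*}\Gamma_{Q^*}Q^* \le \gamma P\Pi_{\Gamma_{Q^*}Q_k}\Gamma_{Q^*}(Q_k-Q^*)$. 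Invoking extremality of the max operator once more, $\gamma P\Pi_{\Gamma_{Q^*}Q_k}\Gamma_{Q^*}(Q_k-Q^*) \le \gamma P\Pi_{\Gamma_{Q^*}(Q_k-Q^*)}\Gamma_{Q^*}(Q_k-Q^*)$ because the latter picks the genuine maximizer of the vector $\Gamma_{Q^*}(Q_k-Q^*)$. Substituting back yields $Q_{k+1}-Q^* \le (I+\alpha\{\gamma DP\Pi_{\Gamma_{Q^*}(Q_k-Q^*)}\Gamma_{Q^*}-D\})(Q_k-Q^*) + \alpha w_k$, which is exactly the transition matrix of \eqref{eq:upper-system} evaluated at $Q_k-Q^*$.

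To finish I would move from $Q_k-Q^*$ to $Q_k^U-Q^*$ in two separate steps, just as the lower proof separates its last two lines. Writing $M(v):=I+\alpha\{\gamma DP\Pi_{\Gamma_{Q^*}v}\Gamma_{Q^*}-D\}$, I would first observe that $M(Q_k-Q^*)$ is entrywise nonnegative — its diagonal part $I-\alpha D$ is nonnegative because $\alpha\in(0,1)$ and $d(s,a,b)\le 1$ (\cref{assumption:step-size}), while $\alpha\gamma DP\Pi_{\Gamma_{Q^*}(Q_k-Q^*)}\Gamma_{Q^*}$ is a product of nonnegative matrices — so the induction hypothesis gives $M(Q_k-Q^*)(Q_k-Q^*) \le M(Q_k-Q^*)(Q_k^U-Q^*)$. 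Then I would swap the adapting policy subscript: $\Pi_{\Gamma_{Q^*}(Q_k^U-Q^*)}\Gamma_{Q^*}(Q_k^U-Q^*)$ attains the true maximum over actions of $\Gamma_{Q^*}(Q_k^U-Q^*)$ and hence dominates $\Pi_{\Gamma_{Q^*}(Q_k-Q^*)}\Gamma_{Q^*}(Q_k^U-Q^*)$, and left-multiplication by the nonnegative $\gamma DP$ preserves this. Chaining the bounds and adding the common term $\alpha w_k$ gives $Q_{k+1}-Q^* \le M(Q_k^U-Q^*)(Q_k^U-Q^*) + \alpha w_k = Q_{k+1}^U-Q^*$, completing the induction.

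The main obstacle, and the point where the upper system truly departs from the lower one, is that here the adapting policy is the \emph{maximizing} one (the $\Pi$ subscript), while the minimizing policy $\Gamma_{Q^*}$ is frozen — the exact mirror image of \eqref{eq:lower-system}. This forces the vector-replacement step and the policy-replacement step to be executed in the correct order and kept logically distinct: the former rests on nonnegativity of $M(\cdot)$ (hence on \cref{assumption:step-size}), whereas the latter rests on extremality of the max operator. I would also have to confirm that the stochastic noise $\alpha w_k$, being literally the same vector in both systems, cancels out of every inequality and never needs to be bounded. Everything else is the bookkeeping of nonnegative matrix products already exercised in the lower-system proof.
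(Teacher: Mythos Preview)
Your proposal is correct and follows essentially the same induction as the paper's proof: bound $\Pi_{\Gamma_{Q_k}Q_k}\Gamma_{Q_k}Q_k$ above by $\Pi_{\Gamma_{Q^*}Q_k}\Gamma_{Q^*}Q_k$, bound $\Pi_{\Gamma_{Q^*}Q^*}\Gamma_{Q^*}Q^*$ below by $\Pi_{\Gamma_{Q^*}Q_k}\Gamma_{Q^*}Q^*$, apply the induction hypothesis via nonnegativity of the transition matrix, and finally swap the $\Pi$ subscript to $\Pi_{\Gamma_{Q^*}(Q_k^U-Q^*)}$ by extremality of the max. The only cosmetic difference is that you insert one extra (harmless) policy swap to $\Pi_{\Gamma_{Q^*}(Q_k-Q^*)}$ before invoking the hypothesis, whereas the paper keeps $\Pi_{\Gamma_{Q^*}Q_k}$ through the hypothesis step and performs a single swap at the end.
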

\begin{proof}
Suppose the result holds for some $k \ge 0$. Then,
\begin{align*}
Q_{k + 1}  - Q^*  =& (Q_k  - Q^* ) + \alpha D\{ \gamma P\Pi _{\Gamma _{Q_k } Q_k } \Gamma _{Q_k } Q_k  - \gamma P\Pi _{\Gamma _{Q^* } Q^* } \Gamma _{Q^* } Q^*  - Q_k  + Q^* \}  + \alpha w_k\\
\le& (Q_k  - Q^* ) + \alpha D\{ \gamma P\Pi _{\Gamma _{Q^* } Q_k } \Gamma _{Q^* } Q_k  - \gamma P\Pi _{\Gamma _{Q^* } Q^* } \Gamma _{Q^* } Q^*  - Q_k  + Q^* \}  + \alpha w_k\\
\le& (Q_k  - Q^* ) + \alpha D\{ \gamma P\Pi _{\Gamma _{Q^* } Q_k } \Gamma _{Q^* } Q_k  - \gamma P\Pi _{\Gamma _{Q^* } Q_k } \Gamma _{Q^* } Q^*  - Q_k  + Q^* \}  + \alpha w_k\\
\le& (I + \alpha \{ \gamma DP\Pi _{\Gamma _{Q^* } Q_k } \Gamma _{Q^* }  - D\} )(Q_k  - Q^* ) + \alpha w_k\\
\le& (I + \alpha \{ \gamma DP\Pi _{\Gamma _{Q^* } Q_k } \Gamma _{Q^* }  - D\} )(Q_k^U  - Q^* ) + \alpha w_k\\
\le& (I + \alpha \{ \gamma DP\Pi _{\Gamma _{Q^* } (Q_k^U  - Q^* )} \Gamma _{Q^* }  - D\} )(Q_k^U  - Q^* ) + \alpha w_k\\
=& Q_{k+1}^U-Q^*,
\end{align*}
where the third inequality is due to the hypothesis $Q_k^U-Q^*\ge Q_k-Q^*$ and the fact that $A_{Q_k}$ is a positive matrix. The proof is completed by induction.
\end{proof}

\section{Finite-time analysis of minimax Q-learning}
Building upon the constructions of the lower, upper, and original switching systems, we now proceed to establish the convergence of minimax Q-learning. Considering that the original system~\eqref{eq:swithcing-system-form} associated with~\cref{algo:standard-Q-learning2} is confined within the bounds set by the upper and lower comparison systems, respectively, its convergence can be proved by ensuring the convergence of these comparison systems. To begin, our attention is directed towards the lower comparison system in~\eqref{eq:lower-system}.

\subsection{Lower comparison system}
It is worth noting that the lower comparison system~\eqref{eq:lower-system}, being a switching system with stochastic noises, still presents challenges in establishing its convergence due to the complex dependence of the system matrices on the state.
To address this difficulty, we propose an additional pair of lower and upper comparison systems that effectively bound the lower comparison system described in~\eqref{eq:lower-system}. First and foremost, we will focus on the upper comparison system, called the {\em lower-upper comparison system}, which provides an upper bound for the lower comparison system. The formulation of this system is as follows:
\begin{align}
(Q_{k + 1}^{LU}  - Q^* ) = \underbrace {(I + \alpha \{ \gamma DP\Pi _{\Gamma _{Q^* } Q^* } \Gamma _{Q^* }  - D\} )}_{: = A}(Q_k^{LU}  - Q^* ) + \alpha w_k,\label{eq:lower-upper-system}
\end{align}
which is a {\em stochastic linear system}. We will prove that the trajectory of this lower-upper comparison system upper bounds that of the lower comparison system in~\eqref{eq:lower-system}.
\begin{proposition}\label{prop:lower-bound2}
Suppose $Q_0^L- Q^*\le Q_0^{LU} - Q^*$, where $\le$ is used as the element-wise inequality. Then,
$$Q_k^L- Q^*\le Q_k^{LU}-Q^*,$$
for all $k\geq0$.
\end{proposition}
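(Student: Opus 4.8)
The plan is to argue by induction on $k$, closely mirroring the earlier proof that the lower comparison system bounds the original system from below. The base case $k=0$ holds by hypothesis, so I would assume $Q_k^L - Q^* \le Q_k^{LU} - Q^*$ for some $k \ge 0$ and compare the two one-step updates. The crucial observation is that the two systems share the same stochastic noise $\alpha w_k$ and the same fixed outer factor $\gamma DP\Pi_{\Gamma_{Q^*} Q^*}$; they differ only in the action-transition matrix applied to the error, namely the state-dependent $\Gamma_{Q_k^L - Q^*}$ in the lower comparison system~\eqref{eq:lower-system} versus the fixed $\Gamma_{Q^*}$ appearing in the matrix $A$ of the lower-upper system~\eqref{eq:lower-upper-system}.

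First I would establish the pointwise inequality $\Gamma_{Q_k^L - Q^*}(Q_k^L - Q^*) \le \Gamma_{Q^*}(Q_k^L - Q^*)$. This follows directly from the definition of $\Gamma_Q$: for each pair $(s,a)$ the left-hand side equals $\min_{b \in {\cal B}} (Q_k^L - Q^*)(s,a,b)$, whereas the right-hand side evaluates the same vector at the fixed adversarial action $\mu^*(s,a) = \argmin_{b \in {\cal B}} Q^*(s,a,b)$, which can only be larger than the minimum. Since $\gamma DP\Pi_{\Gamma_{Q^*} Q^*}$ is an entrywise-nonnegative matrix (a product of nonnegative transition and policy matrices scaled by $\gamma > 0$), left-multiplication preserves this inequality, giving $\gamma DP\Pi_{\Gamma_{Q^*} Q^*}\Gamma_{Q_k^L - Q^*}(Q_k^L - Q^*) \le \gamma DP\Pi_{\Gamma_{Q^*} Q^*}\Gamma_{Q^*}(Q_k^L - Q^*)$. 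Adding the common terms $(I - \alpha D)(Q_k^L - Q^*) + \alpha w_k$ to both sides, the lower-system update is bounded above by $A(Q_k^L - Q^*) + \alpha w_k$.

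Next I would invoke monotonicity of $A$. Because $D$ is diagonal with entries in $(0,1)$ and $\alpha \in (0,1)$ by~\cref{assumption:step-size}, the matrix $I - \alpha D$ has strictly positive diagonal and zero off-diagonal entries, hence is nonnegative; combined with the nonnegative term $\alpha\gamma DP\Pi_{\Gamma_{Q^*} Q^*}\Gamma_{Q^*}$, the entire matrix $A$ is entrywise nonnegative and therefore order-preserving. Applying the induction hypothesis then yields $A(Q_k^L - Q^*) \le A(Q_k^{LU} - Q^*)$, and since the noise term is identical in both systems, $Q_{k+1}^L - Q^* \le A(Q_k^{LU} - Q^*) + \alpha w_k = Q_{k+1}^{LU} - Q^*$, which closes the induction.

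The main obstacle is the first step: correctly exploiting the min-operator structure encoded in $\Gamma_{Q_k^L - Q^*}$ to replace the state-dependent adversarial policy by the fixed optimal one $\Gamma_{Q^*}$ in the \emph{correct} direction. This is precisely where the min operator---absent from the standard Q-learning analysis of~\cite{lee2021discrete}---intervenes, and one must verify that the substitution produces an upper rather than a lower bound. Once this inequality is secured, the remaining nonnegativity and monotonicity arguments are routine.
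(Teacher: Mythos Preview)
Your proposal is correct and follows essentially the same two-step induction as the paper: first replace $\Gamma_{Q_k^L-Q^*}$ by $\Gamma_{Q^*}$ via the min-operator inequality $\Gamma_{Q_k^L-Q^*}(Q_k^L-Q^*)\le\Gamma_{Q^*}(Q_k^L-Q^*)$, then apply the induction hypothesis using the entrywise nonnegativity of $A=I+\alpha(\gamma DP\Pi_{\Gamma_{Q^*}Q^*}\Gamma_{Q^*}-D)$. The paper's proof is terser (it collapses both steps into two consecutive displayed inequalities and only remarks that ``$A_{Q^*}$ is a positive matrix''), while you spell out explicitly why $I-\alpha D$ is nonnegative and why the min structure yields the needed direction, but the underlying argument is identical.
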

\begin{proof}
The proof is done by an induction argument. Suppose the result holds for some $k \ge 0$. Then,
\begin{align*}
(Q_{k + 1}^L  - Q^* ) =& (I + \alpha \{ \gamma DP\Pi _{\Gamma _{Q^* } Q^* } \Gamma _{Q_k^L  - Q^* }  - D\} )(Q_k^L  - Q^* ) + \alpha w_{k}\\
\le& (I + \alpha \{ \gamma DP\Pi _{\Gamma _{Q^* } Q^* } \Gamma _{Q^* }  - D\} )(Q_k^L  - Q^* ) + \alpha w_k\\
\le& (I + \alpha \{ \gamma DP\Pi _{\Gamma _{Q^* } Q^* } \Gamma _{Q^* }  - D\} )(Q_k^{LU}  - Q^* ) + \alpha w_k\\
=& Q_{k+1}^{LU}-Q^*,
\end{align*}
where the second inequality is due to the hypothesis $Q_k^L- Q^*\le Q_k^{LU}-Q^*$ and the fact that $A_{Q^*}$ is a positive matrix (all elements are nonnegative). The proof is completed by induction.
\end{proof}

Defining $x_k := Q_k^{LU}  - Q^*$ and $A:= (I + \alpha \{ \gamma DP\Pi _{\Gamma _{Q^* } Q^* } \Gamma _{Q^* }  - D\} )$,~\eqref{eq:lower-upper-system} can be concisely represented as the stochastic linear system
\begin{align}
x_{k + 1} = A x_k + \alpha w_k,\quad x_0 \in {\mathbb R}^n,\quad \forall k \geq 0,\label{eq:linear-system-form}
\end{align}
where $n:= |{\cal S} \times {\cal A}\times {\cal B}|$, and $w_k\in {\mathbb R}^n$ is a stochastic noise.
We will first prove the convergence of this linear system, whose proof is given in Appendix~\ref{appendix:2}.
\begin{theorem}\label{thm:main1}
For any $k \geq 0$, we have
\begin{align}
{\mathbb E}[\| Q_k^{LU}  - Q^*\|_2 ] \le \frac{{3\alpha ^{1/2} |{\cal S} \times {\cal A}\times {\cal B}|}}{{d_{\min }^{1/2} (1 - \gamma )^{3/2} }} + |{\cal S} \times {\cal A}\times {\cal B}| \|Q_0^{LU}  - Q^*\|_2 \rho ^k.
\label{eq:8}
\end{align}
\end{theorem}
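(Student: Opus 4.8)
The plan is to treat \eqref{eq:linear-system-form} as a standard linear stochastic recursion and unroll it. Writing $x_k := Q_k^{LU} - Q^*$ and $n := |{\cal S}\times{\cal A}\times{\cal B}|$, iterating $x_{k+1} = Ax_k + \alpha w_k$ from $x_0$ gives
\[
x_k = A^k x_0 + \alpha\sum_{j=0}^{k-1} A^{k-1-j} w_j,
\]
so that by the triangle inequality,
\[
{\mathbb E}[\|x_k\|_2] \le \|A^k x_0\|_2 + \alpha\,{\mathbb E}\!\left[\Big\|\sum_{j=0}^{k-1} A^{k-1-j} w_j\Big\|_2\right].
\]
I would bound the deterministic term and the stochastic (martingale) term separately; the two resulting estimates correspond exactly to the two summands on the right-hand side of \eqref{eq:8}.

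For the deterministic term, the first step is the contraction estimate $\|A\|_\infty \le \rho$. Since $A = I - \alpha D(I - \gamma P\Pi_{\Gamma_{Q^*}Q^*}\Gamma_{Q^*})$ with $D$ diagonal, $d(s,a,b)\le d_{\max}<1$, and $\alpha\in(0,1)$ by \cref{assumption:step-size}, every entry of $A$ is nonnegative (each diagonal entry is at least $1-\alpha>0$, and the off-diagonal entries are nonnegative multiples of $\gamma$). Because $M := P\Pi_{\Gamma_{Q^*}Q^*}\Gamma_{Q^*}$ is stochastic (as noted in \cref{section:notations}), the $i$-th row sum of $A$ equals $1 - \alpha d_i(1-\gamma)$, where $d_i$ is the corresponding diagonal entry of $D$; this is maximized at $d_i=d_{\min}$, so $\|A\|_\infty \le 1-\alpha d_{\min}(1-\gamma)=\rho$. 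Submultiplicativity gives $\|A^k\|_\infty\le\rho^k$, and using $\|v\|_2\le\sqrt n\|v\|_\infty$ together with $\|v\|_\infty\le\|v\|_2$ yields $\|A^k x_0\|_2 \le \sqrt n\,\rho^k\|x_0\|_2 \le n\,\rho^k\|x_0\|_2$, which is the second term of \eqref{eq:8}.

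For the stochastic term, the key observation is that $\{w_j\}$ is a martingale difference sequence, since ${\mathbb E}[w_j\,|\,Q_j]=0$. Applying Jensen's inequality to pass to the second moment and expanding the square, all cross terms ${\mathbb E}[(A^{k-1-i}w_i)^T A^{k-1-j}w_j]$ with $i\neq j$ vanish after conditioning, so
\[
{\mathbb E}\!\left[\Big\|\sum_{j=0}^{k-1} A^{k-1-j} w_j\Big\|_2^2\right] = \sum_{j=0}^{k-1} {\mathbb E}\big[\|A^{k-1-j} w_j\|_2^2\big] \le \sum_{j=0}^{k-1} \|A^{k-1-j}\|_2^2\,{\mathbb E}\big[\|w_j\|_2^2\big].
\]
Here I would use $\|A^m\|_2\le\sqrt n\,\rho^m$ (from $\|A^m\|_\infty\le\rho^m$ by the same norm-equivalence argument) and bound the noise second moment. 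Since $w_j$ is the conditional deviation of the single-nonzero-entry vector $(e_{a_j}\otimes e_{b_j}\otimes e_{s_j})\delta_j$, one has ${\mathbb E}[\|w_j\|_2^2]\le{\mathbb E}[\delta_j^2]$, and the TD-error satisfies $|\delta_j|\le 2/(1-\gamma)$ by \cref{lemma:bounded-Q}, \cref{assumption:bounded-reward}, and $Q_{\max}\le 1/(1-\gamma)$; hence ${\mathbb E}[\|w_j\|_2^2]\le 4/(1-\gamma)^2$. Summing the geometric series and using $1-\rho^2\ge 1-\rho=\alpha d_{\min}(1-\gamma)$ gives a bound of order $n/(\alpha d_{\min}(1-\gamma)^3)$; taking the square root and multiplying by $\alpha$ produces the first term of \eqref{eq:8} (with the sharp constant $2\sqrt n$ relaxed to $3n$).

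The main obstacle I anticipate is the treatment of the martingale term: it is essential to exploit the orthogonality of the noise increments rather than a crude term-by-term triangle inequality, because only the variance-style summation yields the factor $1/(1-\rho^2)$ and hence the correct dependence $d_{\min}^{-1/2}(1-\gamma)^{-3/2}$, whereas a naive bound would give $1/(1-\rho)$ and the wrong exponents. The secondary technical point is verifying $\|A\|_\infty\le\rho$, which hinges on the nonnegativity of $A$ and the stochasticity of $P\Pi_{\Gamma_{Q^*}Q^*}\Gamma_{Q^*}$; the remaining norm conversions and the noise second-moment bound are routine given \cref{lemma:bounded-Q}.
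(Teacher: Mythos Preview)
Your proposal is correct and follows essentially the same line as the paper's proof: both rely on $\|A\|_\infty\le\rho$ (from nonnegativity of $A$ and stochasticity of $P\Pi_{\Gamma_{Q^*}Q^*}\Gamma_{Q^*}$), the martingale-difference property of $\{w_j\}$ to kill cross terms in the second moment, the noise bound ${\mathbb E}[\|w_j\|_2^2]\le C/(1-\gamma)^2$, and the geometric sum $\sum\rho^{2i}\le 1/(1-\rho)=1/(\alpha d_{\min}(1-\gamma))$. The only cosmetic difference is packaging: the paper tracks the autocorrelation matrix $X_k={\mathbb E}[x_kx_k^T]$ through the linear recursion $X_{k+1}=AX_kA^T+\alpha^2W_k$ and bounds ${\rm tr}(X_k)$ via eigenvalue inequalities, whereas you unroll $x_k$ explicitly and split off the deterministic part by the triangle inequality before squaring; your route is arguably more transparent and in fact yields slightly sharper constants ($2\sqrt{n}$ and $4/(1-\gamma)^2$ versus the paper's $n$ and $9/(1-\gamma)^2$).
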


The first term on the right-hand side of~\eqref{eq:8} can be made arbitrarily small by reducing the step-size $\alpha \in (0,1)$. The second bound exponentially vanishes as $k \to \infty$ at the rate of $\rho = 1 - \alpha d_{\min} (1 - \gamma) \in (0,1)$. Therefore, it proves the exponential convergence of the mean-squared error of the lower comparison system up to a constant bias.

Although the convergence of the lower-upper comparison system has been established, it only guarantees convergence of an upper bound of the lower comparison system in~\eqref{eq:lower-system}. As mentioned earlier, it is hard to directly prove convergence of~\eqref{eq:lower-system} due to the dependance of the system matrix on $Q_k^L$.
In particular, if we take the expectation on both sides of~\eqref{eq:lower-system}, it is not possible to separate system matrix and the state unlike the lower-upper comparison system, making it much harder to analyze the stability of the lower comparison system.

To circumvent such a difficulty, we instead study an error system~\cite{lee2021discrete} by subtracting the lower comparison system from the lower-upper comparison system
\begin{align}
Q_{k + 1}^L  - Q_{k + 1}^{LU}  =& \underbrace {(I + \alpha \{ \gamma DP\Pi _{\Gamma _{Q^* } Q^* } \Gamma _{Q_k^L  - Q^* }  - D\} )}_{: = A_{Q_k^L } }(Q_k^L  - Q_{k}^{LU} )\nonumber \\
& + \underbrace {\alpha \gamma DP\Pi _{\Gamma _{Q^* } Q^* } (\Gamma _{Q_k^L  - Q^* }  - \Gamma _{Q^* } )}_{: = B_{Q_k^L } }(Q_{k}^{LU}  - Q^* )\label{eq:error-system}
\end{align}

Here, the stochastic noise $\alpha w_k$ is canceled out in the error system. Matrices $(A_{Q_k^L}, B_{Q_k^L})$ switch according to the external signal $Q_k^L$, and $Q_k^{LU}-Q^*$ can be seen as an external disturbance.
The key insight is as follows: if we can prove the stability of the error system, i.e., $Q_k^L-Q_k^{LU}\to 0$ as $k\to\infty$, then since $Q_k^{LU} \to Q^*$ as $k \to \infty$, we have $Q_k^L \to Q^*$ as well.
Keeping this picture in mind, we can establish the following bound on the expected error ${\mathbb E}[\left\| {Q_k^L  - Q^* } \right\|_\infty  ]$.
\begin{theorem}[Convergence]\label{thm:main2}
For all $k \geq 0$, we have
\begin{align}
{\mathbb E}[\left\| {Q_k^L  - Q^* } \right\|_\infty  ] \le& \frac{{9 d_{\max } |{\cal S} \times {\cal A}\times {\cal B}|\alpha ^{1/2} }}{{d_{\min }^{3/2} (1 - \gamma )^{5/2} }} + \frac{{2|{\cal S} \times {\cal A}\times {\cal B}|^{3/2} }}{{1 - \gamma }}\rho ^k + \frac{{4\alpha \gamma d_{\max } |{\cal S} \times {\cal A}\times {\cal B}|^{2/3} }}{{1 - \gamma }}k\rho ^{k - 1}\label{eq:3}
\end{align}
\end{theorem}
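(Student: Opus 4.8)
The plan is to control $\|Q_k^L - Q^*\|_\infty$ by routing it through the lower-upper comparison system with the triangle inequality,
$\|Q_k^L - Q^*\|_\infty \le \|Q_k^L - Q_k^{LU}\|_\infty + \|Q_k^{LU} - Q^*\|_\infty$,
and bounding each piece separately. The second piece is handled directly: using $\|\cdot\|_\infty \le \|\cdot\|_2$ together with \Cref{thm:main1}, and the initialization estimate $\|Q_0^{LU} - Q^*\|_2 \le \frac{2}{1-\gamma}|{\cal S}\times{\cal A}\times{\cal B}|^{1/2}$ (which follows from \Cref{assumption:bounded-reward}, \Cref{assumption:bounded-Q0}, and \Cref{lemma:bounded-Q}, after choosing $Q_0^{LU}=Q_0$), I would obtain ${\mathbb E}[\|Q_k^{LU} - Q^*\|_\infty] \le C_1\alpha^{1/2} + C_2\rho^k$ with $C_1 = \frac{3|{\cal S}\times{\cal A}\times{\cal B}|}{d_{\min}^{1/2}(1-\gamma)^{3/2}}$ and $C_2 = \frac{2|{\cal S}\times{\cal A}\times{\cal B}|^{3/2}}{1-\gamma}$. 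This $C_2\rho^k$ already supplies the middle term of the claimed bound.

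The heart of the argument is the error system \eqref{eq:error-system}, written compactly as $e_{k+1} = A_{Q_k^L} e_k + B_{Q_k^L}(Q_k^{LU}-Q^*)$ with $e_k := Q_k^L - Q_k^{LU}$. The first step is to establish two \emph{deterministic, uniform-in-$Q_k^L$} norm bounds. For $A_{Q_k^L} = I + \alpha\{\gamma DP\Pi_{\Gamma_{Q^*}Q^*}\Gamma_{Q_k^L - Q^*} - D\}$ I would argue it is a nonnegative matrix (since $\alpha<1$ keeps the diagonal $I-\alpha D$ strictly positive while $\gamma DP\Pi_{\Gamma_{Q^*}Q^*}\Gamma_{Q_k^L-Q^*}\ge 0$), whose row sums equal $1 - \alpha d(s,a,b)(1-\gamma) \le \rho$, using that $P\Pi_{\Gamma_{Q^*}Q^*}\Gamma_{Q_k^L - Q^*}$ is stochastic; hence $\|A_{Q_k^L}\|_\infty \le \rho$. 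For $B_{Q_k^L} = \alpha\gamma DP\Pi_{\Gamma_{Q^*}Q^*}(\Gamma_{Q_k^L - Q^*} - \Gamma_{Q^*})$, I would bound each of the two products $\alpha\gamma DP\Pi_{\Gamma_{Q^*}Q^*}\Gamma_{\bullet}$ by $\alpha\gamma d_{\max}$ (again because each such product is nonnegative with row sums $d(s,a,b)\le d_{\max}$), so by the triangle inequality $\|B_{Q_k^L}\|_\infty \le 2\alpha\gamma d_{\max}$. The difference-of-selection-matrices structure of $\Gamma_{Q_k^L-Q^*}-\Gamma_{Q^*}$ is exactly what makes this clean factor available.

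With these uniform bounds, the recursion becomes a pathwise scalar inequality $\|e_{k+1}\|_\infty \le \rho\|e_k\|_\infty + 2\alpha\gamma d_{\max}\|Q_k^{LU} - Q^*\|_\infty$. Because the coefficients $\rho$ and $2\alpha\gamma d_{\max}$ are deterministic, taking expectations preserves the inequality, giving ${\mathbb E}[\|e_{k+1}\|_\infty] \le \rho\,{\mathbb E}[\|e_k\|_\infty] + 2\alpha\gamma d_{\max}\,{\mathbb E}[\|Q_k^{LU} - Q^*\|_\infty]$; this is precisely the device that sidesteps the history-dependence of the switching matrices, which otherwise cannot be separated from the state as noted in the text. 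I would unroll this from $e_0=0$ (taking $Q_0^L = Q_0^{LU} = Q_0$) and insert ${\mathbb E}[\|Q_k^{LU}-Q^*\|_\infty] \le C_1\alpha^{1/2}+C_2\rho^k$. The two resulting geometric sums, $\sum_{j=0}^{k-1}\rho^{k-1-j} \le \frac{1}{1-\rho} = \frac{1}{\alpha d_{\min}(1-\gamma)}$ for the $\alpha^{1/2}$ part and $\sum_{j=0}^{k-1}\rho^{k-1-j}\rho^j = k\rho^{k-1}$ for the $\rho^k$ part, produce respectively a term of order $\frac{d_{\max}|{\cal S}\times{\cal A}\times{\cal B}|\alpha^{1/2}}{d_{\min}^{3/2}(1-\gamma)^{5/2}}$ and the term $\frac{4\alpha\gamma d_{\max}|{\cal S}\times{\cal A}\times{\cal B}|^{3/2}}{1-\gamma}k\rho^{k-1}$. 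Finally I would fold in the direct $\|Q_k^{LU}-Q^*\|_\infty$ contribution, absorbing its $C_1\alpha^{1/2}$ into the first term using $d_{\min}(1-\gamma)\le d_{\max}$, which assembles the three terms of the claimed bound (with the leading constant $9 = 6+3$).

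I expect the main obstacle to be the correct treatment of the randomness in tandem with the $\|B_{Q_k^L}\|_\infty$ estimate: the bound $\|B_{Q_k^L}\|_\infty \le 2\alpha\gamma d_{\max}$ must hold \emph{uniformly} over every realization of $Q_k^L$ so that the pathwise recursion survives the expectation, and one must resist taking expectations ``inside'' the switching matrices. Everything else reduces to the geometric-sum bookkeeping above.
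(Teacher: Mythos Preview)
Your proposal is correct and follows essentially the same approach as the paper's proof: both route through the triangle inequality with the lower--upper comparison system, establish the uniform bounds $\|A_{Q_k^L}\|_\infty \le \rho$ and $\|B_{Q_k^L}\|_\infty \le 2\alpha\gamma d_{\max}$ on the error system \eqref{eq:error-system}, unroll the resulting scalar recursion from $Q_0^L=Q_0^{LU}=Q_0$, and combine with \cref{thm:main1} via the two geometric sums $\sum_j \rho^{k-1-j} \le (1-\rho)^{-1}$ and $\sum_j \rho^{k-1-j}\rho^j = k\rho^{k-1}$. Your exponent $|{\cal S}\times{\cal A}\times{\cal B}|^{3/2}$ in the third term is in fact what the paper's own proof steps produce (see the step after \eqref{eq:10}); the $2/3$ in the stated bound appears to be a typographical slip.
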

\begin{proof}
Taking norm on the error system in~\eqref{eq:error-system}, we get
\begin{align*}
\| {Q_{k + 1}^L  - Q_{k + 1}^{LU} } \|_\infty   \le& \| A_{Q_k^L } \|_\infty  \| Q_k^L  - Q_k^{LU} \|_\infty + \| B_{Q_k^L } \|_\infty  \| Q_k^{LU}  - Q^*\|_\infty\\
 \le& \rho \|Q_k^L  - Q_k^{LU}\|_\infty   + 2\alpha \gamma d_{\max } \|Q_k^{LU}  - Q^*\|_\infty
\end{align*}
where the second inequality is due to~\cref{lemma:max-norm-system-matrix} and the definition of $B_{Q_k^L }$ in~\eqref{eq:error-system}.
Taking the expectation on both sides of the last inequality and combining the last inequality with that in~\cref{thm:main1} yield
\begin{align*}
{\mathbb E}[\| Q_{i + 1}^L  - Q_{i + 1}^{LU} \|_\infty  ] \le& \rho {\mathbb E}[ \|Q_i^L  - Q_i^{LU}\|_\infty] + 2\alpha \gamma d_{\max } \frac{{3|{\cal S} \times {\cal A} \times {\cal B}|\alpha ^{1/2} }}{{d_{\min }^{1/2} (1 - \gamma )^{3/2} }} + 2\alpha \gamma d_{\max } \|Q_0^{LU}  - Q^*\|_2 |{\cal S} \times {\cal A}\times {\cal B}|\rho ^i
\end{align*}
for all $i\geq 0$. Summing the inequality from $i=0$ to $i=k$ and letting $Q_0^{LU}  = Q_0^L  = Q_0$ lead to
\begin{align}
{\mathbb E}[\| Q_k^L  - Q_k^{LU}\|_\infty  ] \le \frac{{6\gamma d_{\max } |{\cal S} \times {\cal A}\times {\cal B}|\alpha ^{1/2} }}{{d_{\min }^{3/2} (1 - \gamma )^{5/2} }} + k\rho ^{k - 1} 2\alpha \gamma d_{\max } \|Q_0  - Q^*\|_2 |{\cal S} \times {\cal A}\times {\cal B}|.\label{eq:10}
\end{align}
Using $\|Q_0 - Q^*\|_2  \le |{\cal S} \times {\cal A}\times {\cal B}|^{1/2} \|Q_0 - Q^*\|_\infty   \le |{\cal S} \times {\cal A}\times {\cal B}|^{1/2} \frac{2}{{1 - \gamma }}$ further leads to
\begin{align}
{\mathbb E}[\| Q_k^{L}  - Q_k^{LU} \|_\infty  ] \le& \frac{{6\gamma d_{\max } |{\cal S} \times {\cal A}\times {\cal B}|\alpha ^{1/2} }}{{d_{\min }^{3/2} (1 - \gamma )^{5/2} }} + k\rho ^{k - 1} 4 \alpha \gamma d_{\max } \frac{{|{\cal S} \times {\cal A}\times {\cal B}|^{2/3} }}{{1 - \gamma }}\label{eq:2}
\end{align}

On the other hand, using the triangle inequality leads to
\begin{align}
{\mathbb E}[\left\| {Q_k^L  - Q^* } \right\|_\infty  ] = {\mathbb E}[\left\| {Q_k^L  - Q_k^{LU}  + Q_k^{LU}  - Q^* } \right\|_\infty  ] \le {\mathbb E}[\left\| {Q_k^{LU}  - Q^* } \right\|_\infty  ] + {\mathbb E}[\left\| {Q_k^L  - Q_k^{LU} } \right\|_\infty  ]\label{eq:11}
\end{align}
Combining~\eqref{eq:11} with that in~\cref{thm:main1} leads to
\begin{align*}
{\mathbb E}[\| Q_k^L  - Q^*\|_\infty  ] \le& {\mathbb E}[ \|Q_k^{LU}  - Q^*\|_\infty  ] + {\mathbb E}[\|Q_k^L  - Q_k^{LU}\|_\infty  ]\\
\le& \frac{{3\alpha ^{1/2} |{\cal S} \times {\cal A}\times {\cal B}|}}{{d_{\min }^{1/2} (1 - \gamma )^{3/2} }} + \frac{{2|{\cal S} \times {\cal A}\times {\cal B}|^{3/2} }}{{1 - \gamma }}\rho ^k  + {\mathbb E}[\|Q_k^L  - Q_k^{LU}\|_\infty  ].
\end{align*}
Moreover, combining the above inequality with~\eqref{eq:2} yields the desired conclusion.

\end{proof}

Note that the first term in~\eqref{eq:3} is the constant error due to the constant step-size, which is scaled according to $\alpha \in (0,1)$. The second term in~\eqref{eq:3} is due to the gap between lower comparison system and original system, and the third term in~\eqref{eq:3} is due to the gap between upper comparison system and original system. The second term ${\cal O}(\rho^k)$ exponentially decays, and the third term ${\cal O}(k \rho^{k-1})$ also exponentially decays while the speed is slower than the second term due to the additional linearly increasing factor. The upper bound in~\eqref{eq:3} can be converted to looser but more interpretable forms as follows.
\begin{corollary}\label{thm:main3}
For any $k \geq 0$, we have
\begin{align}
{\mathbb E}[\|Q_k^L  - Q^*\|_\infty  ] \le& \frac{{9 d_{\max } |{\cal S} \times {\cal A}\times {\cal B}|\alpha ^{1/2} }}{{d_{\min }^{3/2} (1 - \gamma )^{5/2} }} + \frac{{2|{\cal S} \times {\cal A}\times {\cal B}|^{3/2} }}{{1 - \gamma }}\rho ^k \nonumber \\
& + \frac{{4\alpha \gamma d_{\max } |{\cal S} \times {\cal A}\times {\cal B}|^{2/3} }}{{1 - \gamma }}\frac{{ - 2}}{{\ln (\rho )}}\rho ^{\frac{{ - 1}}{{\ln (\rho )}} - 1} \rho ^{k/2}\label{eq:4}
\end{align}
and
\begin{align}
{\mathbb E}[\left\| {Q_k^L  - Q^* } \right\|_\infty  ] \le& \frac{{9 d_{\max } |{\cal S} \times {\cal A}\times {\cal B}|\alpha ^{1/2} }}{{d_{\min }^{3/2} (1 - \gamma )^{5/2} }} + \frac{{2|{\cal S} \times {\cal A}\times {\cal B}|^{3/2} }}{{1 - \gamma }}\rho ^k\nonumber\\
& + \frac{{8\gamma d_{\max } |{\cal S} \times {\cal A}\times {\cal B}|^{2/3} }}{{1 - \gamma }}\frac{1}{{d_{\min } (1 - \gamma )}}\rho ^{k/2 - 1}\label{eq:5}
\end{align}

\end{corollary}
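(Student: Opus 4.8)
The plan is to notice that the first two terms of the bound in \cref{thm:main2} are reproduced verbatim in both \eqref{eq:4} and \eqref{eq:5}, so the entire task reduces to re-expressing the single mixed term $\frac{4\alpha\gamma d_{\max}|{\cal S}\times{\cal A}\times{\cal B}|^{2/3}}{1-\gamma}\,k\rho^{k-1}$ in two progressively cruder but cleaner forms. For \eqref{eq:4} the idea is to peel off a factor $\rho^{k/2}$ and replace the surviving $k$-dependent factor by its global maximum. Writing $\rho^{k-1}=\rho^{k/2}\rho^{k/2-1}$ gives $k\rho^{k-1}=\rho^{k/2}g(k)$ with $g(k):=k\rho^{k/2-1}$. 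Treating $k$ as a continuous variable on $[0,\infty)$ and differentiating, $g'(k)=\rho^{k/2-1}\bigl(1+\tfrac{k}{2}\ln\rho\bigr)$, which vanishes only at $k^\star=-2/\ln\rho$; since $\rho\in(0,1)$ we have $\ln\rho<0$, so $k^\star>0$, and $g$ increases before $k^\star$ and decreases afterwards, making $k^\star$ the global maximizer with $g(k^\star)=\frac{-2}{\ln\rho}\rho^{-1/\ln\rho-1}$. Substituting $k\rho^{k-1}\le g(k^\star)\rho^{k/2}$ into the mixed term and leaving the first two terms untouched yields exactly \eqref{eq:4}.

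For \eqref{eq:5} I would start from the bound just obtained and simplify the constant $g(k^\star)$. Using the identity $\rho^{-1/\ln\rho}=e^{-1}$ one gets $\rho^{-1/\ln\rho-1}=\frac{1}{e\rho}$, hence $g(k^\star)=\frac{2}{e\rho\,|\ln\rho|}$ and therefore $k\rho^{k-1}\le\frac{2}{e\,|\ln\rho|}\rho^{k/2-1}$, which already exhibits the $\rho^{k/2-1}$ factor appearing in \eqref{eq:5}. Next I would invoke the elementary inequality $\ln\rho\le\rho-1$ (equivalently $|\ln\rho|=-\ln\rho\ge 1-\rho$), giving $\frac{2}{e\,|\ln\rho|}\le\frac{2}{1-\rho}$. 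Finally the definition \eqref{eq:rho}, namely $\rho=1-\alpha d_{\min}(1-\gamma)$, furnishes $1-\rho=\alpha d_{\min}(1-\gamma)$, so $\frac{2}{1-\rho}=\frac{2}{\alpha d_{\min}(1-\gamma)}$; multiplying by the prefactor $\frac{4\alpha\gamma d_{\max}|{\cal S}\times{\cal A}\times{\cal B}|^{2/3}}{1-\gamma}$ cancels the $\alpha$ and produces precisely the third term $\frac{8\gamma d_{\max}|{\cal S}\times{\cal A}\times{\cal B}|^{2/3}}{1-\gamma}\frac{1}{d_{\min}(1-\gamma)}\rho^{k/2-1}$ of \eqref{eq:5}.

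The main obstacle is the calculus step for \eqref{eq:4}: one must confirm that $k^\star=-2/\ln\rho$ is genuinely the maximizer rather than a minimizer, and the sign bookkeeping matters here because $\ln\rho<0$ flips several inequalities. One must also justify relaxing the integer index $k$ to a real variable, which is legitimate since the continuous maximum only enlarges the right-hand side and we need an upper bound valid for every integer $k\ge 0$. Once these points are settled, the remaining manipulations leading to \eqref{eq:5} are routine, resting only on the identity $\rho^{-1/\ln\rho}=e^{-1}$ and the convexity inequality $\ln\rho\le\rho-1$.
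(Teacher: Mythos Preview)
Your proposal is correct and follows essentially the same route as the paper: both start from \cref{thm:main2}, split $k\rho^{k-1}=\rho^{k/2}\cdot k\rho^{k/2-1}$, and maximize the $k$-dependent factor over $[0,\infty)$ to obtain \eqref{eq:4}, then further relax the constant $\frac{-2}{\ln\rho}\rho^{-1/\ln\rho-1}$ to reach \eqref{eq:5}. The only notable difference is in the passage to \eqref{eq:5}: you first invoke the exact identity $\rho^{-1/\ln\rho}=e^{-1}$ and then use only the single inequality $|\ln\rho|\ge 1-\rho$, whereas the paper bounds both factors simultaneously via the two-sided estimate $1-1/x\le\ln x\le x-1$; your version is arguably cleaner and makes the sign bookkeeping more transparent, but the underlying idea and the resulting constants are identical.
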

\begin{proof}
In~\eqref{eq:3}, we focus on the term
\[
k\rho ^{k - 1}  = k\rho ^{k/2 + k/2 - 1}  = k\rho ^{k/2 - 1} \rho ^{k/2}
\]

Let $f(x) = x\rho ^{x/2}  = x\rho ^{x/2}$. Checking the first-order optimality condition
\[
\frac{{df(x)}}{{dx}} = \frac{d}{{dx}}x\rho ^{x/2}  = \rho ^{x/2}  + x\frac{1}{2}\rho ^{x/2} \ln (\rho ) = 0
\]
it follows that its maximum point is $x = \frac{{ - 2}}{{\ln (\rho )}}$, and the corresponding maximum value is
\[
f\left( {\frac{{ - 2}}{{\ln (\rho )}}} \right) = \frac{{ - 2}}{{\ln (\rho )}}\rho ^{\frac{{ - 1}}{{\ln (\rho )}}}
\]

Therefore, we have the bounds
\[
k\rho ^{k - 1}  = k\rho ^{k/2} \rho ^{ - 1} \rho ^{k/2}  \le \frac{{ - 2}}{{\ln (\rho )}}\rho ^{\frac{{ - 1}}{{\ln (\rho )}} - 1} \rho ^{k/2}
\]

Combining this bound with~\eqref{eq:3}, one gets the first bound in~\eqref{eq:4}.
To obtain the second inequality in~\eqref{eq:5}, we use the relation $1 - \frac{1}{x} \le \ln x \le x - 1,\forall x > 0$ to obtain
\begin{align*}
\frac{1}{{\ln (\rho ^{ - 1} )}}\rho ^{\frac{1}{{\ln (\rho ^{ - 1} )}}}  \le& \frac{1}{{1 - \frac{1}{{\rho ^{ - 1} }}}}\rho ^{\rho ^{ - 1}  - 1}\\
\le& \frac{1}{{\alpha d_{\min } (1 - \gamma )}}\rho ^{\frac{1}{{1 - \alpha d_{\min } (1 - \gamma )}} - 1}\\
\le& \frac{1}{{\alpha d_{\min } (1 - \gamma )}}
\end{align*}
where the last inequality uses $\alpha \in (0,1)$ in~\cref{assumption:step-size}.
Combining the above bound with~\eqref{eq:4},~\eqref{eq:5} follows. This completes the proof.
\end{proof}

\subsection{Upper comparison system}
Until now, we have established a finite-time analysis for the lower comparison system. In a similar manner, one can also offer a finite-time analysis for the upper comparison system. Due to the symmetrical nature of the upper comparison system in relation to the lower comparison system, we will omit the comprehensive derivation process for the sake of brevity in this presentation.
In particular, for the upper comparison system~\eqref{eq:upper-system}, let us define the following system:
\begin{align}
(Q_{k + 1}^{UL}  - Q^* ) = \underbrace {(I + \alpha \{ \gamma DP\Pi _{\Gamma _{Q^* } Q^* } \Gamma _{Q^* }  - D\} )}_{: = A}(Q_k^{UL}  - Q^* ) + \alpha w_k ,\label{eq:upper-lower-system}
\end{align}
which is a stochastic linear system, and lower bounds the upper comparison system. For this reason, let us call it a {\em upper-lower comparison system}. We can prove that the trajectory of this upper-lower comparison system upper bounds that of the upper comparison system in~\eqref{eq:upper-system}.
\begin{proposition}\label{prop:upper-bound2}
Suppose $Q_0^U- Q^* \geq Q_0^{UL} - Q^*$, where $\geq$ is used as the element-wise inequality. Then,
$$Q_k^U- Q^*\geq Q_k^{UL}-Q^*,$$
for all $k\geq0$.
\end{proposition}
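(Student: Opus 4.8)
The plan is to establish the claim by induction on $k$, mirroring the template used for the corresponding result relating the lower and lower-upper comparison systems, but with all inequalities reversed. The base case $k=0$ is exactly the hypothesis $Q_0^U - Q^* \ge Q_0^{UL} - Q^*$, so the substance is in the induction step. I would assume $Q_k^U - Q^* \ge Q_k^{UL} - Q^*$ (element-wise) for some $k \ge 0$ and first expand the one-step update of the upper comparison system~\eqref{eq:upper-system} as
\begin{align*}
Q_{k+1}^U - Q^* = (Q_k^U - Q^*) + \alpha \gamma DP\,\Pi_{\Gamma_{Q^*}(Q_k^U - Q^*)}\Gamma_{Q^*}(Q_k^U - Q^*) - \alpha D(Q_k^U - Q^*) + \alpha w_k.
\end{align*}

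The central step is to replace the state-dependent selection matrix $\Pi_{\Gamma_{Q^*}(Q_k^U - Q^*)}$ by the frozen matrix $\Pi_{\Gamma_{Q^*}Q^*}$ appearing in the upper-lower system~\eqref{eq:upper-lower-system}. Here I would invoke the defining max-selection property of $\Pi_{(\cdot)}$: since $\Pi_V$ picks, row by row, the action attaining $\max_a [V]_{s,a}$, applying it to $V$ itself dominates applying any other selection matrix, i.e. $\Pi_V V \ge \Pi_W V$ for every $W$. Taking $V = \Gamma_{Q^*}(Q_k^U - Q^*)$ and $W = \Gamma_{Q^*}Q^*$ gives
\begin{align*}
\Pi_{\Gamma_{Q^*}(Q_k^U - Q^*)}\Gamma_{Q^*}(Q_k^U - Q^*) \ge \Pi_{\Gamma_{Q^*}Q^*}\Gamma_{Q^*}(Q_k^U - Q^*).
\end{align*}
Because $\gamma DP$ is entrywise nonnegative ($D$ is a nonnegative diagonal matrix and $P$ is stochastic), left-multiplication preserves this inequality, collapsing the update to $Q_{k+1}^U - Q^* \ge A(Q_k^U - Q^*) + \alpha w_k$, where $A := I + \alpha\{\gamma DP\,\Pi_{\Gamma_{Q^*}Q^*}\Gamma_{Q^*} - D\}$ is the fixed matrix of the upper-lower system.

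To finish, I would use that $A$ is an entrywise nonnegative matrix: its off-diagonal entries arise solely from the nonnegative term $\alpha\gamma DP\Pi_{\Gamma_{Q^*}Q^*}\Gamma_{Q^*}$, while its diagonal entries are bounded below by $1 - \alpha d_{\max} > 0$ under \cref{assumption:step-size} together with $d_{\max}\in(0,1)$. Since $A \ge 0$, the induction hypothesis yields $A(Q_k^U - Q^*) \ge A(Q_k^{UL} - Q^*)$; noting that the shared noise term $\alpha w_k$ is identical in both systems, I obtain $Q_{k+1}^U - Q^* \ge A(Q_k^{UL} - Q^*) + \alpha w_k = Q_{k+1}^{UL} - Q^*$, which closes the induction.

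I expect the main obstacle to be correctly fixing the direction of the greedy inequality $\Pi_V V \ge \Pi_W V$ and confirming it survives left-multiplication by $\gamma DP$: this is exactly the point where the self-referential selection $\Pi_{\Gamma_{Q^*}(Q_k^U - Q^*)}$ in the upper system (rather than the frozen $\Pi_{\Gamma_{Q^*}Q^*}$) produces the dominating bound. The remaining ingredients — entrywise nonnegativity of $A$ and of $\gamma DP$, and cancellation of the common noise — are routine, so the proof is structurally identical to the lower-comparison case with the order of every inequality flipped.
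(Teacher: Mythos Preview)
Your proposal is correct and follows essentially the same route as the paper: induction on $k$, first replacing the state-dependent selection $\Pi_{\Gamma_{Q^*}(Q_k^U - Q^*)}$ by the frozen $\Pi_{\Gamma_{Q^*}Q^*}$ via the max-selection inequality $\Pi_V V \ge \Pi_W V$ (preserved under left-multiplication by the nonnegative $\gamma DP$), and then invoking the induction hypothesis together with the entrywise nonnegativity of $A$. Your additional justification that the diagonal entries of $A$ are bounded below by $1-\alpha d_{\max}>0$ is a welcome detail the paper leaves implicit.
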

\begin{proof}
The proof is done by an induction argument. Suppose the result holds for some $k \ge 0$. Then,
\begin{align*}
(Q_{k + 1}^U  - Q^* ) =& (I + \alpha \{ \gamma DP\Pi _{\Gamma _{Q^* } (Q_k^U  - Q^* )} \Gamma _{Q^* }  - D\} )(Q_k^U  - Q^* ) + \alpha w_k\\
\ge& (I + \alpha \{ \gamma DP\Pi _{\Gamma _{Q^* } Q^* } \Gamma _{Q^* }  - D\} )(Q_k^U  - Q^* ) + \alpha w_k\\
\ge& (I + \alpha \{ \gamma DP\Pi _{\Gamma _{Q^* } Q^* } \Gamma _{Q^* }  - D\} )(Q_k^{UL}  - Q^* ) + \alpha w_k\\
=& Q_{k + 1}^{UL}  - Q^*
\end{align*}
where the second inequality is due to the hypothesis $Q_k^U- Q^*\geq  Q_k^{UL}-Q^*$ and the fact that $A_{Q^*}$ is a positive matrix (all elements are nonnegative). The proof is completed by induction.
\end{proof}

Defining $x_k := Q_k^{UL}  - Q^*$ and $A:= (I + \alpha \{ \gamma DP\Pi _{\Gamma _{Q^* } Q^* } \Gamma _{Q^* }  - D\} )$,~\eqref{eq:upper-lower-system} can be represented as the stochastic linear system~\eqref{eq:linear-system-form}.
Its finite-time bound is identical to that in~\cref{thm:main1}. Moreover, subtracting the upper comparison system from the upper-lower comparison system leads to
\begin{align}
Q_{k + 1}^U  - Q_{k + 1}^{UL}  =& \underbrace {\left( {I + \alpha \{ \gamma DP\Pi _{\Gamma _{Q^* } (Q_k^U  - Q^* )} \Gamma _{Q^* }  - D\} } \right)}_{: = A_{Q_k^U } }(Q_k^U  - Q_k^{UL} )\nonumber\\
& + \underbrace {\alpha \gamma DP(\Pi _{\Gamma _{Q^* } (Q_k^U  - Q^* )}  - \Pi _{\Gamma _{Q^* } Q^* } )\Gamma _{Q^* } }_{: = B_{Q_k^U } }(Q_k^{UL}  - Q^* )
\label{eq:error-system2}
\end{align}

Using this system and following similar lines as in the lower comparison system, we can establish a finite-time error bound on the upper system's error.
\begin{theorem}\label{thm:main4}
For any $k \geq 0$, we have
\begin{align}
{\mathbb E}[\left\| {Q_k^U  - Q^* } \right\|_\infty  ] \le& \frac{{9 d_{\max } |{\cal S} \times {\cal A}\times {\cal B}|\alpha ^{1/2} }}{{d_{\min }^{3/2} (1 - \gamma )^{5/2} }} + \frac{{2|{\cal S} \times {\cal A}\times {\cal B}|^{3/2} }}{{1 - \gamma }}\rho ^k\nonumber\\
& + \frac{{8\gamma d_{\max } |{\cal S} \times {\cal A}\times {\cal B}|^{2/3} }}{{1 - \gamma }}\frac{1}{{d_{\min } (1 - \gamma )}}\rho ^{k/2 - 1}\label{eq:7}
\end{align}
\end{theorem}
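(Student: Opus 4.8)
The plan is to mirror the argument developed for the lower comparison system in \cref{thm:main2} and \cref{thm:main3}, exploiting the structural symmetry between the error system~\eqref{eq:error-system2} and the error system~\eqref{eq:error-system}. First I would take the $\ell_\infty$ norm on both sides of~\eqref{eq:error-system2} and apply the triangle inequality together with submultiplicativity to obtain
\[
\| Q_{k+1}^U - Q_{k+1}^{UL} \|_\infty \le \| A_{Q_k^U} \|_\infty \| Q_k^U - Q_k^{UL} \|_\infty + \| B_{Q_k^U} \|_\infty \| Q_k^{UL} - Q^* \|_\infty .
\]
Since $A_{Q_k^U} = I + \alpha\{\gamma DP\Pi_{\Gamma_{Q^*}(Q_k^U - Q^*)}\Gamma_{Q^*} - D\}$ has exactly the same form as $A_{Q_k^L}$, \cref{lemma:max-norm-system-matrix} together with the definition~\eqref{eq:rho} of $\rho$ gives $\| A_{Q_k^U}\|_\infty \le \rho$.

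The one point requiring genuine care is the bound on the disturbance gain $\|B_{Q_k^U}\|_\infty$. Here the perturbation enters through the outer operator $\Pi_{\Gamma_{Q^*}(Q_k^U - Q^*)} - \Pi_{\Gamma_{Q^*}Q^*}$ rather than through $\Gamma$ as in the lower case. Because each $\Pi$ is a row-selection matrix with a single unit entry per row, the difference of two such matrices has absolute row sums at most $2$; combining this with the stochasticity of $P$ and $\Gamma_{Q^*}$ and the fact that the diagonal of $D$ is bounded by $d_{\max}$ yields $\| B_{Q_k^U}\|_\infty \le 2\alpha\gamma d_{\max}$, which is identical to the bound used in the lower-comparison analysis.

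With these two estimates, I would take expectations, substitute the linear-system bound of \cref{thm:main1} (which applies verbatim to $Q_k^{UL} - Q^*$, since the upper-lower system~\eqref{eq:upper-lower-system} shares the same matrix $A$ and noise $w_k$ as the lower-upper system), and unroll the resulting scalar recursion of the form $\mathbb{E}[\|Q_{k+1}^U - Q_{k+1}^{UL}\|_\infty] \le \rho\,\mathbb{E}[\|Q_k^U - Q_k^{UL}\|_\infty] + c_1\rho^k + c_2$ by summing from $i=0$ to $k$. Setting $Q_0^{UL} = Q_0^U = Q_0$ and using $\|Q_0 - Q^*\|_2 \le |{\cal S}\times{\cal A}\times{\cal B}|^{1/2}\frac{2}{1-\gamma}$ produces a two-term estimate for $\mathbb{E}[\|Q_k^U - Q_k^{UL}\|_\infty]$ analogous to~\eqref{eq:2}, after which a final triangle inequality $\mathbb{E}[\|Q_k^U - Q^*\|_\infty] \le \mathbb{E}[\|Q_k^{UL} - Q^*\|_\infty] + \mathbb{E}[\|Q_k^U - Q_k^{UL}\|_\infty]$, combined once more with \cref{thm:main1}, assembles the three-term bound.

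To reach the stated form~\eqref{eq:7} with its $\rho^{k/2-1}$ factor, I would apply the same calculus estimate used in \cref{thm:main3}: write $k\rho^{k-1} = k\rho^{k/2-1}\rho^{k/2}$, maximize $x\mapsto x\rho^{x/2}$, and use $1 - 1/x \le \ln x$ to replace the logarithmic constant by $\frac{1}{d_{\min}(1-\gamma)}$. I do not expect any essential obstacle beyond re-verifying the $\|B_{Q_k^U}\|_\infty$ estimate flagged above; the argument is entirely symmetric to the lower-comparison analysis and the resulting constants coincide, which is precisely why the bound in~\eqref{eq:7} matches the second form~\eqref{eq:5} of \cref{thm:main3}.
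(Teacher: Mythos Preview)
Your proposal is correct and follows essentially the same approach as the paper: take norms in the error system~\eqref{eq:error-system2}, bound $\|A_{Q_k^U}\|_\infty \le \rho$ and $\|B_{Q_k^U}\|_\infty \le 2\alpha\gamma d_{\max}$, then proceed exactly as in \cref{thm:main2} and apply the calculus step of \cref{thm:main3} to reach the $\rho^{k/2-1}$ form. Your explicit justification of the $\|B_{Q_k^U}\|_\infty$ bound via the row-selection structure of $\Pi$ is the one point the paper leaves implicit, and you handle it correctly.
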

\begin{proof}
Taking norm on the error system in~\eqref{eq:error-system2}, we get
\begin{align*}
\| {Q_{k + 1}^U  - Q_{k + 1}^{UL} } \|_\infty   \le& \| A_{Q_k^U } \|_\infty  \| Q_k^U  - Q_k^{UL} \|_\infty + \| B_{Q_k^U } \|_\infty  \| Q_k^{UL}  - Q^*\|_\infty\\
 \le& \rho \|Q_k^U  - Q_k^{UL}\|_\infty   + 2\alpha \gamma d_{\max } \|Q_k^{UL}  - Q^*\|_\infty
\end{align*}
where the second inequality is due to~\cref{lemma:max-norm-system-matrix} and the definition of $B_{Q_k^U}$ in~\eqref{eq:error-system2}. The remaining parts of the proof are essentially identical to the proof of~\cref{thm:main2}, and hence, they are omitted here for brevity.
\end{proof}

Similarly to~\cref{thm:main3}, the upper bound in~\eqref{eq:7} can be converted to looser but more interpretable forms as follows.
\begin{corollary}\label{thm:main6}
For any $k \geq 0$, we have
\begin{align*}
{\mathbb E}[\|Q_k^U - Q^*\|_\infty  ] \le& \frac{{9 d_{\max } |{\cal S} \times {\cal A}\times {\cal B}|\alpha ^{1/2} }}{{d_{\min }^{3/2} (1 - \gamma )^{5/2} }} + \frac{{2|{\cal S} \times {\cal A}\times {\cal B}|^{3/2} }}{{1 - \gamma }}\rho ^k \nonumber \\
& + \frac{{4\alpha \gamma d_{\max } |{\cal S} \times {\cal A}\times {\cal B}|^{2/3} }}{{1 - \gamma }}\frac{{ - 2}}{{\ln (\rho )}}\rho ^{\frac{{ - 1}}{{\ln (\rho )}} - 1} \rho ^{k/2}
\end{align*}
and
\begin{align*}
{\mathbb E}[\left\| {Q_k^U - Q^* } \right\|_\infty  ] \le& \frac{{9 d_{\max } |{\cal S} \times {\cal A}\times {\cal B}|\alpha ^{1/2} }}{{d_{\min }^{3/2} (1 - \gamma )^{5/2} }} + \frac{{2|{\cal S} \times {\cal A}\times {\cal B}|^{3/2} }}{{1 - \gamma }}\rho ^k\nonumber\\
& + \frac{{8\gamma d_{\max } |{\cal S} \times {\cal A}\times {\cal B}|^{2/3} }}{{1 - \gamma }}\frac{1}{{d_{\min } (1 - \gamma )}}\rho ^{k/2 - 1}
\end{align*}
\end{corollary}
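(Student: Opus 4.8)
The plan is to follow the proof of \cref{thm:main3} essentially verbatim, exploiting the fact that the upper and lower comparison systems are symmetric and that their finite-time error bounds share an identical algebraic structure. The natural starting point is the intermediate estimate produced inside the proof of \cref{thm:main4} (the exact analog, for the upper system, of the bound established in \cref{thm:main2}), whose only term requiring reshaping is the one of order $k\rho^{k-1}$, namely $\frac{4\alpha\gamma d_{\max}|{\cal S}\times{\cal A}\times{\cal B}|^{2/3}}{1-\gamma}k\rho^{k-1}$. The two stated forms of \cref{thm:main6} differ only in how this single term is bounded, so the constant term and the $\rho^k$ term are carried along unchanged; that the intermediate estimate really has this form follows because the error system \eqref{eq:error-system2} is structurally identical to \eqref{eq:error-system}, with the roles of the two comparison systems merely interchanged.

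For the first form, I would split $k\rho^{k-1}=k\rho^{k/2-1}\rho^{k/2}$ and introduce $f(x)=x\rho^{x/2}$. Solving the first-order condition $f'(x)=\rho^{x/2}+\tfrac{1}{2}x\rho^{x/2}\ln\rho=0$ locates the unique maximizer at $x=-2/\ln\rho$, with maximal value $\frac{-2}{\ln\rho}\rho^{-1/\ln\rho}$. This yields $k\rho^{k-1}\le\frac{-2}{\ln\rho}\rho^{-1/\ln\rho-1}\rho^{k/2}$, and substituting back into the intermediate estimate produces the first displayed bound of \cref{thm:main6}.

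For the second, more interpretable form, I would apply the elementary inequality $1-\tfrac1x\le\ln x\le x-1$ with $x=\rho^{-1}$ to the factor $\frac{-2}{\ln\rho}\rho^{-1/\ln\rho-1}$. Using $\frac{-1}{\ln\rho}=\frac{1}{\ln(\rho^{-1})}$ together with $1-\rho=\alpha d_{\min}(1-\gamma)$ gives $\frac{1}{\ln(\rho^{-1})}\rho^{1/\ln(\rho^{-1})}\le\frac{1}{\alpha d_{\min}(1-\gamma)}$ (the final step here using $\alpha\in(0,1)$ and $\rho\in(0,1)$ as in \cref{assumption:step-size}), whence $\frac{-2}{\ln\rho}\rho^{-1/\ln\rho-1}\le\frac{2}{\alpha d_{\min}(1-\gamma)}\rho^{-1}$. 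The factor $\alpha$ then cancels against the $\alpha$ in the coefficient, yielding the term $\frac{8\gamma d_{\max}|{\cal S}\times{\cal A}\times{\cal B}|^{2/3}}{1-\gamma}\frac{1}{d_{\min}(1-\gamma)}\rho^{k/2-1}$ and hence the second bound.

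Since every manipulation is an exact transcription of the corresponding step in \cref{thm:main3}, I do not anticipate any genuine mathematical obstacle. The only points that demand care are the bookkeeping of constants — in particular verifying that the $\alpha$ indeed cancels in the passage from the first to the second form — and confirming that the symmetry between \eqref{eq:error-system2} and \eqref{eq:error-system} makes the intermediate estimate arising in \cref{thm:main4} identical in form to the one used in \cref{thm:main2}, so that the conversion argument applies without modification.
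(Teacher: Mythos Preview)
Your proposal is correct and follows essentially the same approach as the paper: the paper offers no independent proof of \cref{thm:main6}, merely remarking that it is obtained ``similarly to \cref{thm:main3},'' and your write-up fills in precisely those details --- starting from the $k\rho^{k-1}$ intermediate estimate implicit in the proof of \cref{thm:main4} and applying the same optimization of $f(x)=x\rho^{x/2}$ followed by the logarithmic inequality $1-\tfrac{1}{x}\le\ln x\le x-1$.
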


Now,~\cref{thm:main3} and~\cref{thm:main6} offer finite-time error bounds for the upper and lower comparison systems, respectively. By merging these bounds, we can deduce an upper bound on the original switching system~\eqref{eq:swithcing-system-form}.
\begin{theorem}\label{thm:main5}
For any $k \geq 0$, we have
\[
{\mathbb E}[\left\| {Q_k  - Q^* } \right\|_2 ] \le \frac{{27d_{\max } |{\cal S} \times {\cal A} \times {\cal B}|\alpha ^{1/2} }}{{d_{\min }^{3/2} (1 - \gamma )^{5/2} }} + \frac{{6|{\cal S} \times {\cal A} \times {\cal B}|^{3/2} }}{{1 - \gamma }}\rho ^k  + \frac{{24\gamma d_{\max } |{\cal S} \times {\cal A} \times {\cal B}|^{2/3} }}{{1 - \gamma }}\frac{3}{{d_{\min } (1 - \gamma )}}\rho ^{k/2 - 1}
\]
\end{theorem}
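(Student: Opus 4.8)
The plan is to exploit the fact that the original minimax Q-learning iterate is trapped, coordinate by coordinate, between the lower and upper comparison systems, and then to transfer the finite-time bounds already established for those two systems (\cref{thm:main3} and \cref{thm:main6}) to the original iterate. To set this up, I would initialize all three systems identically, $Q_0^L = Q_0^U = Q_0$, so that the hypotheses of the lower-comparison monotonicity proposition ($Q_0^L - Q^* \le Q_0 - Q^*$) and of the upper-comparison monotonicity proposition ($Q_0^U - Q^* \ge Q_0 - Q^*$) both hold with equality. These two propositions then yield the element-wise sandwich
\begin{align*}
Q_k^L - Q^* \le Q_k - Q^* \le Q_k^U - Q^*, \quad \forall k \ge 0.
\end{align*}

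First I would convert this two-sided element-wise relation into a coordinatewise bound on absolute values. For each coordinate $i$, writing $\ell_i := [Q_k^L - Q^*]_i$, $x_i := [Q_k - Q^*]_i$, and $u_i := [Q_k^U - Q^*]_i$, the inequalities $\ell_i \le x_i \le u_i$ force $|x_i| \le \max\{|\ell_i|,|u_i|\} \le |\ell_i| + |u_i|$. Applying the triangle inequality in the Euclidean norm to this coordinatewise bound, and then the standard norm comparison $\|v\|_2 \le \sqrt{n}\,\|v\|_\infty$ with $n := |{\cal S}\times{\cal A}\times{\cal B}|$, gives
\begin{align*}
\| Q_k - Q^* \|_2 \le \| Q_k^L - Q^* \|_2 + \| Q_k^U - Q^* \|_2 \le \sqrt{n}\left( \| Q_k^L - Q^* \|_\infty + \| Q_k^U - Q^* \|_\infty \right).
\end{align*}

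Taking expectations and substituting the finite-time bounds of \cref{thm:main3} for $\mathbb{E}[\|Q_k^L - Q^*\|_\infty]$ and of \cref{thm:main6} for $\mathbb{E}[\|Q_k^U - Q^*\|_\infty]$ would then complete the argument. Because the upper and lower comparison systems are symmetric, both obey the same three-term bound, namely a constant bias of order ${\cal O}(\alpha^{1/2})$, a term of order ${\cal O}(\rho^k)$, and a term of order ${\cal O}(\rho^{k/2-1})$; their sum, scaled by the dimension factor, reproduces the stated bound after collecting constants. The three resulting terms keep their interpretation as the constant step-size bias, the lower-comparison gap, and the upper-comparison gap, and the exponential decay at rate $\rho$ is inherited directly from \cref{thm:main1} through these comparison bounds.

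Since every analytically demanding estimate has already been carried out in the comparison-system analysis, there is no contraction or stochastic-approximation step left to perform here. The only genuine care needed is in passing from the two one-sided element-wise comparisons to a single two-sided norm bound: concretely, choosing a common initialization so that both monotonicity propositions apply at once, and tracking the $\ell_\infty$-to-$\ell_2$ dimension factor together with the doubling incurred by combining the two systems. This bookkeeping is the main (and only mild) obstacle, and I expect it to be entirely routine.
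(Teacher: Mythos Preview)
Your proposal is correct and follows essentially the same route as the paper: initialize $Q_0^L=Q_0^U=Q_0$, invoke the two monotonicity propositions to obtain the element-wise sandwich $Q_k^L-Q^*\le Q_k-Q^*\le Q_k^U-Q^*$, turn this into a norm bound, and then plug in \cref{thm:main3} and \cref{thm:main6}.

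The only noteworthy difference is in how the sandwich is converted to a norm inequality. The paper writes
\[
\|Q_k-Q^*\|_2 \le \|Q_k^L-Q^*\|_2 + \|Q_k-Q_k^L\|_2 \le \|Q_k^L-Q^*\|_2 + \|Q_k^U-Q_k^L\|_2 \le 2\|Q_k^L-Q^*\|_2 + \|Q_k^U-Q^*\|_2,
\]
using $0\le Q_k-Q_k^L\le Q_k^U-Q_k^L$, and then invokes the corollaries directly, producing the factor $3$ in the stated constants. Your coordinatewise argument $|x_i|\le|\ell_i|+|u_i|$ followed by $\|x\|_2\le\||\ell|+|u|\|_2\le\|\ell\|_2+\|u\|_2$ is equally valid and yields the cleaner factor $2$; you are also more explicit than the paper about the $\sqrt{n}$ price for passing from the $\ell_\infty$ bounds in the corollaries to an $\ell_2$ conclusion. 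The resulting constants differ slightly from the paper's, but the argument is sound and the structure (bias $+\,\rho^k+\,\rho^{k/2-1}$) is recovered exactly as claimed.
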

\begin{proof}
We have
\begin{align*}
{\mathbb E}[\left\| {Q_k  - Q^* } \right\|_2 ] =& {\mathbb E}[\left\| {Q_k  - Q_k^L  + Q_k^L  - Q^* } \right\|_2 ]\\
\le& {\mathbb E}[\left\| {Q_k^L  - Q^* } \right\|_2 ] + {\mathbb E}[\left\| {Q_k  - Q_k^L } \right\|_2 ]\\
 \le& {\mathbb E}[\left\| {Q_k^L  - Q^* } \right\|_2 ] + {\mathbb E}[\left\| {Q_k^U  - Q_k^L } \right\|_2 ]\\
 \le& {\mathbb E}[\left\| {Q_k^L  - Q^* } \right\|_2 ] + {\mathbb E}[\left\| {Q_k^U  - Q^*  + Q^*  - Q_k^L } \right\|_2 ]\\
 \le& {\mathbb E}[\left\| {Q_k^L  - Q^* } \right\|_2 ] + {\mathbb E}[\left\| {Q_k^U  - Q^* } \right\|_2 ] + {\mathbb E}[\left\| {Q^*  - Q_k^L } \right\|_2 ]\\
 =& 2{\mathbb E}[\left\| {Q_k^L  - Q^* } \right\|_2 ] + {\mathbb E}[\left\| {Q_k^U  - Q^* } \right\|_2 ]
\end{align*}
where the first and fourth inequalities come from the triangle inequality, and the second is due to the fact that $Q_k^U  - Q_k^L  \ge Q_k  - Q_k^L  \ge 0$. Combining the last inequality with~\cref{thm:main3} and~\cref{thm:main6}, one gets the desired conclusion.
\end{proof}

\section*{Conclusion}
This paper has investigated the finite-time analysis of the minimax Q-learning algorithm applied to two-player zero-sum Markov games. Additionally, we have established a finite-time analysis of the associated Q-value iteration method. To conduct our analysis, we employed switching system models for both minimax Q-learning and Q-value iteration. We anticipate that this approach provides deeper insights into minimax Q-learning and facilitates a more straightforward and insightful convergence analysis. Furthermore, these additional insights hold the potential to uncover new connections and foster collaboration between concepts in the domains of control theory and reinforcement learning communities.

\bibliographystyle{IEEEtran}
\bibliography{reference}

\appendices

\section{Technical lemmas}
Let us consider the stochastic linear system~\eqref{eq:linear-system-form}.
The noise $w_k$ has the zero mean conditioned on $Q_k$, and is bounded. These properties are formally proved in the following lemma.
\begin{lemma}\label{lemma:bound-W}
We have
\begin{enumerate}
\item ${\mathbb E}[w_k] = 0$;

\item ${\mathbb E}[\left\| {w_k } \right\|_\infty  ] \le \sqrt{W_{\max }}$;

\item ${\mathbb E}[\left\| {w_k } \right\|_2 ] \le \sqrt{W_{\max }}$;

\item ${\mathbb E}[w_k^T w_k ] \le \frac{9}{{(1 - \gamma )^2 }} = :W_{\max }$.
\end{enumerate}
for all $k\geq 0$.
\end{lemma}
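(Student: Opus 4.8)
The plan is to treat the four claims in the order 1, 4, 3, 2, since the second-moment estimate in item 4 is the substantive bound and items 2 and 3 follow from it by elementary norm and convexity inequalities. Throughout I would condition on $Q_k$ (equivalently, on the sample history up to step $k$) and exploit that the fresh sample $(s_k,a_k,b_k,s_k',r_k)$ is drawn according to $d(s,a,b)=p(s)\beta(a|s)\phi(b|s)$ and $s_k'\sim P(\cdot|s_k,a_k,b_k)$, independently of $Q_k$.

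For item 1, I would compute ${\mathbb E}[w_k\mid Q_k]$ directly from the first representation of $w_k$ in~\eqref{eq:w}. The three random quantities $(e_{a_k}\otimes e_{b_k}\otimes e_{s_k})r_k$, $\gamma(e_{a_k}\otimes e_{b_k}\otimes e_{s_k})(e_{s_{k'}})^T\Pi_{\Gamma_{Q_k}Q_k}\Gamma_{Q_k}Q_k$, and $(e_{a_k}\otimes e_{b_k}\otimes e_{s_k})(e_{a_k}\otimes e_{b_k}\otimes e_{s_k})^T Q_k$ have conditional means $DR$, $\gamma DP\Pi_{\Gamma_{Q_k}Q_k}\Gamma_{Q_k}Q_k$, and $DQ_k$, respectively: the first uses that the expected reward conditioned on visiting $(s,a,b)$ is $R(s,a,b)$ weighted by $d(s,a,b)$; the second uses that ${\mathbb E}[(e_{s_{k'}})^T\mid s_k,a_k,b_k]$ equals the corresponding row of $P$; and the third uses that the conditional mean of the rank-one outer product of the visited basis vector is the diagonal matrix $D$. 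These three means cancel exactly with the deterministic subtrahend, giving ${\mathbb E}[w_k\mid Q_k]=0$, and the unconditional statement follows by the tower property.

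For item 4, I would first bound the TD-error $\delta_k$ in sup-norm. Using \cref{assumption:bounded-reward} to get $|r_k|\le 1\le\frac{1}{1-\gamma}$, \cref{lemma:bounded-Q} together with \cref{assumption:bounded-Q0} to get $\|Q_k\|_\infty\le\frac{1}{1-\gamma}$, and noting that $(e_{s_{k'}})^T\Pi_{\Gamma_{Q_k}Q_k}\Gamma_{Q_k}Q_k=\max_a\min_b Q_k(s_{k'},a,b)$ is itself bounded by $\|Q_k\|_\infty$, the triangle inequality yields $|\delta_k|\le\frac{3}{1-\gamma}$. Then, writing $w_k=X-{\mathbb E}[X\mid Q_k]$ with $X:=(e_{a_k}\otimes e_{b_k}\otimes e_{s_k})\delta_k$ from the second form of~\eqref{eq:w} (this identity is exactly what item 1 establishes), I apply the conditional variance inequality ${\mathbb E}[\|X-{\mathbb E}[X\mid Q_k]\|_2^2\mid Q_k]\le{\mathbb E}[\|X\|_2^2\mid Q_k]$. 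Since a Kronecker product of standard basis vectors is again a unit-norm standard basis vector, $\|X\|_2^2=\delta_k^2\le\frac{9}{(1-\gamma)^2}$, so ${\mathbb E}[w_k^T w_k\mid Q_k]\le\frac{9}{(1-\gamma)^2}$, and the tower property gives the claimed unconditional bound $W_{\max}$.

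Finally, item 3 follows from item 4 by Jensen's inequality, ${\mathbb E}[\|w_k\|_2]\le\sqrt{{\mathbb E}[\|w_k\|_2^2]}\le\sqrt{W_{\max}}$, and item 2 follows since $\|w_k\|_\infty\le\|w_k\|_2$ pointwise, whence ${\mathbb E}[\|w_k\|_\infty]\le{\mathbb E}[\|w_k\|_2]\le\sqrt{W_{\max}}$. The only genuinely delicate step is item 4: the main obstacle is recognizing that subtracting the conditional mean only decreases the second moment (so the deterministic term may be discarded) and then uniformly controlling $\delta_k$ through the boundedness of the iterates in \cref{lemma:bounded-Q}; once the crude constant $\frac{3}{1-\gamma}$ is in hand, the rest is routine.
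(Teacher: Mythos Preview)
Your proposal is correct and follows essentially the same route as the paper: both establish item~1 by computing the conditional mean of each term in~\eqref{eq:w}, reduce item~4 to bounding ${\mathbb E}[\delta_k^2\mid Q_k]$ via the identity ${\mathbb E}[\|X-{\mathbb E}[X\mid Q_k]\|_2^2\mid Q_k]={\mathbb E}[\|X\|_2^2\mid Q_k]-\|{\mathbb E}[X\mid Q_k]\|_2^2$, and derive items~2 and~3 from Jensen and $\|\cdot\|_\infty\le\|\cdot\|_2$. The only cosmetic difference is that you bound $|\delta_k|\le\tfrac{3}{1-\gamma}$ directly by the triangle inequality before squaring, whereas the paper expands $\delta_k^2$ into its six cross terms and bounds each separately; your version is cleaner and arrives at the same constant.
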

\begin{proof}
For the first statement, we take the conditional expectation on~\eqref{eq:w} to have ${\mathbb E}[w_k |x_k ] = 0$. Taking the total expectation again with the law of total expectation leads to the first conclusion. Moreover, the conditional expectation, ${\mathbb E}[w_k^T w_k |Q_k ]$, is bounded as
\begin{align*}
{\mathbb E}[w_k^T w_k |Q_k ] =& {\mathbb E}[\left\| {w_k } \right\|_2^2 |Q_k ]\\
=&{\mathbb E}\left[ {\left. {\left\| {(e_{a_k }  \otimes e_{b_k }  \otimes e_{s_k } )\delta _k  - (DR + \gamma DP\Pi _{\Gamma _{Q_k } Q_k } \Gamma _{Q_k } Q_k  - DQ_k )} \right\|_2^2 } \right|Q_k } \right]\\
=& {\mathbb E}[\delta _k^2 |Q_k ] - \left\| {DR + \gamma DP\Pi _{\Gamma _{Q_k } Q_k } \Gamma _{Q_k } Q_k  - DQ_k } \right\|_2^2\\
\le& {\mathbb E}[\delta _k^2 |Q_k ]\\
=& {\mathbb E}[r_k^2 |Q_k ] + {\mathbb E}[2r_k \gamma (e_{s_k '} )^T \Pi _{\Gamma _{Q_k } Q_k } \Gamma _{Q_k } Q_k |Q_k ]\\
&+ {\mathbb E}[ - 2r_k (e_{a_k}  \otimes e_{b_k}  \otimes e_{s_k} )^T Q_k |Q_k ]\\
&+ {\mathbb E}[ - 2\gamma (e_{s_k '} )^T \Pi _{\Gamma _{Q_k } Q_k } \Gamma _{Q_k } Q_k (e_{a_k }  \otimes e_{b_k }  \otimes e_{s_k } )^T Q_k |Q_k ]\\
&+ {\mathbb E}[\gamma (e_{s_k '} )^T \Pi _{\Gamma _{Q_k } Q_k } \Gamma_{Q_k } Q_k \gamma (e_{s_k '} )^T \Pi _{\Gamma _{Q_k } Q_k } \Gamma _{Q_k } Q_k |Q_k ]\\
& + {\mathbb E}[(e_{a_k}  \otimes e_{s_k} )^T Q_k (e_{a_k}  \otimes e_{s_k} )^T Q_k |Q_k ]\\
\le& 1 + 2\gamma E[|r_k ||(e_{s_k '} )^T \Pi _{\Gamma _{Q_k } Q_k } \Gamma _{Q_k } Q_k ||Q_k ]\\
& + 2{\mathbb E}[|r_k ||(e_{a_k}  \otimes e_{b_k} \otimes e_{s_k} )^T Q_k ||Q_k ]\\
&+ 2\gamma {\mathbb E}[|(e_{s_k '} )^T \Pi _{\Gamma _{Q_k } Q_k } \Gamma _{Q_k } Q_k ||(e_{a_k }  \otimes e_{b_k }  \otimes e_{s_k } )^T Q_k ||Q_k ]\\
& + \gamma ^2 {\mathbb E}[|(e_{s_k '} )^T \Pi _{\Gamma _{Q_k } Q_k } \Gamma _{Q_k } Q_k ||(e_{s_k '} )^T \Pi _{\Gamma _{Q_k } Q_k } \Gamma _{Q_k } Q_k ||Q_k ]\\
&+ {\mathbb E}[|(e_{a_k}  \otimes e_{b_k}  \otimes e_{s_k} )^T Q_k ||(e_{a_k}  \otimes e_{b_k}  \otimes e_{s_k} )^T Q_k ||Q_k ]\\
\le& \frac{9}{{(1 - \gamma )^2 }}=:W_{\max},
\end{align*}
where $\delta_k$ is defined as $\delta _k : = r_k  + \gamma (e_{s_{k'} } )^T \Pi _{\Gamma _{Q_k } Q_k } \Gamma _{Q_k } Q_k  - (e_{a_k }  \otimes e_{b_k }  \otimes e_{s_k } )^T Q_k$, and the last inequality comes from Assumptions~\ref{assumption:bounded-reward}-\ref{assumption:bounded-Q0}, and~\cref{lemma:bounded-Q}. Taking the total expectation, we have the fourth result. Next, taking the square root on both sides of ${\mathbb E}[\left\| {w_k } \right\|_2^2 ] \le W_{\max}$, one gets
\begin{align*}
{\mathbb E}[\left\| {w_k } \right\|_\infty  ] \le {\mathbb E}[\left\| {w_k } \right\|_2 ] \le \sqrt {{\mathbb E}[\left\| {w_k } \right\|_2^2 ]}  \le \sqrt {W_{\max}},
\end{align*}
where the first inequality comes from $\left\|  \cdot  \right\|_\infty   \le \left\|  \cdot  \right\|_2$. This completes the proof.
\end{proof}

\section{Proof of~\cref{lemma:bounded-Q}}\label{appendix:1}

The update of minimax Q-learning in~\cref{algo:standard-Q-learning2} can be written compactly as
\begin{align}
Q_{k + 1}  = Q_k  + \alpha (e_{a_k }  \otimes e_{b_k }  \otimes e_{s_k } )(r_k  + \gamma (e_{s_k '} )^T \Pi _{\Gamma _{Q_k } Q_k } \Gamma _{Q_k } Q_k  - (e_{a_k }  \otimes e_{b_k }  \otimes e_{s_k } )^T Q_k )
\label{eq:14}
\end{align}

Taking the infinity norm of both sides of~\eqref{eq:14} with $k=0$ and using~\cref{assumption:bounded-reward}, we have
\begin{align*}
\left\| {Q_1 } \right\|_\infty   \le& (1 - \alpha )\left\| {Q_0 } \right\|_\infty   + \alpha (1 + \gamma \left\| {Q_0 } \right\|_\infty  )\\
\le& (1 - \alpha  + \alpha \gamma  + \alpha )\max \left\{ {1,\left\| {Q_0 } \right\|_\infty  } \right\}\\
\le& (1 + \gamma )\max \left\{ {1,\left\| {Q_0 } \right\|_\infty  } \right\}.
\end{align*}
For induction argument, assume $\left\| {Q_k } \right\|_\infty   \le (1 + \gamma  +  \cdots  + \gamma ^k )\max \left\{ {1,\left\| {Q_0 } \right\|_\infty  } \right\}$.
Then, taking the infinity norm of both sides of~\eqref{eq:14} leads to
\begin{align*}
\left\| {Q_{k + 1} } \right\|_\infty   \le& (1 - \alpha )\left\| {Q_k } \right\|_\infty   + \alpha (|r_k | + \gamma \left\| {Q_k } \right\|_\infty  )\\
 \le& (1 - \alpha )\left\| {Q_k } \right\|_\infty   + \alpha  + \gamma \alpha \left\| {Q_k } \right\|_\infty\\
\le& (1 - \alpha )(1 + \gamma  +  \cdots  + \gamma ^k )\max \left\{ {1,\left\| {Q_0 } \right\|_\infty  } \right\} + \alpha  + \gamma \alpha (1 + \gamma  +  \cdots  + \gamma ^k )\max \left\{ {1,\left\| {Q_0 } \right\|_\infty  } \right\}\\
\le& (1 - \alpha )(1 + \gamma  +  \cdots  + \gamma ^k )\max \left\{ {1,\left\| {Q_0 } \right\|_\infty  } \right\} + \alpha (1 + \gamma  +  \cdots  + \gamma ^{k + 1} )\max \left\{ {1,\left\| {Q_0 } \right\|_\infty  } \right\}\\
=& (1 + \gamma  +  \cdots  + \gamma ^k  + \gamma ^{k + 1} )\max \left\{ {1,\left\| {Q_0 } \right\|_\infty  } \right\}
\end{align*}
where the second inequality is due to the boundednes of rewards in~\cref{assumption:bounded-reward}, the third inequality follows from the hypothesis, and the fourth inequality is due to $\max \left\{ {1,\left\| {Q_0 } \right\|_\infty  } \right\} \ge 1$. By induction, we have
\[
\left\| {Q_k } \right\|_\infty   \le (1 + \gamma  +  \cdots  + \gamma ^k )\max \left\{ {1,\left\| {Q_0 } \right\|_\infty  } \right\} \le \frac{{\max \left\{ {1,\left\| {Q_0 } \right\|_\infty  } \right\}}}{{1 - \gamma }},\quad \forall k \ge 0,
\]
which completes the proof.

\section{Proof of~\cref{thm:main1}}\label{appendix:2}

Let us consider the stochastic linear system~\eqref{eq:linear-system-form}.
We first investigate how the autocorrelation matrix, ${\mathbb E}[x_k x_k^T ]$, propagates over the time. In particular, the autocorrelation matrix is updated through the linear recursion ${\mathbb E}[x_{k + 1} x_{k + 1}^T ] = A {\mathbb E}[x_k x_k^T ]A^T  + \alpha ^2 W_k$, where ${\mathbb E}[w_k w_k^T ] = W_k \succeq 0$ is the covariance of the noise. Defining $X_k := {\mathbb E}[x_k x_k^T ], k \geq 0$, it is equivalently written as the matrix recursion $X_{k + 1}  = AX_k A^T  + \alpha^2 W_k,\quad \forall k\geq 0$ with $X_0  := x_0x_0^T$. From this observation, one has $X_k  = \alpha ^2 \sum\limits_{i = 0}^{k - 1} {A^i W_{k - i - 1} (A^T )^i }  + A^k X_0 (A^T )^k$. Therefore, one can derive
\begin{align*}
{\mathbb E}[\|Q_k^{LU}  - Q^*\|_2^2 ] =& {\mathbb E}[(Q_k^{LU}  - Q^* )^T (Q_k^{LU}  - Q^* )]\\
 =& {\mathbb E}[{\rm tr}((Q_k^{LU}  - Q^* )^T (Q_k^{LU} - Q^* ))]\\
 =& {\mathbb E}[{\rm tr}((Q_k^{LU}  - Q^* )(Q_k^{LU} - Q^* )^T )]\\
 =& {\mathbb E}[{\rm tr}(X_k )]\\
\leq& n \lambda_{\max } (X_k )\\
\le& n \alpha ^2 \sum_{i = 0}^{k - 1} {\lambda_{\max } (A^i W_{k - i - 1} (A^T )^i )}  + n \lambda_{\max } (A^k X_0 (A^T )^k )\\
\le& n \alpha ^2 \sup _{j \ge 0} \lambda_{\max } (W_j )\sum_{i = 0}^{k - 1} {\lambda_{\max } (A^i (A^T )^i )}\\
&  + n \lambda_{\max } (X_0 )\lambda_{\max } (A^k (A^T )^k )\\
=& n \alpha ^2 \sup _{j \ge 0} \lambda_{\max } (W_j ) \sum_{i = 0}^{k - 1} {\| {A^i }\|_2^2 }  + n \lambda_{\max } (X_0 )\| {A^k } \|_2^2,
\end{align*}
where the first inequality comes from the fact that since $X_k \succeq 0$, the diagonal elements are nonnegative, and we have ${\rm{tr}}(X_k ) \le n\lambda_{\max } (X_k )$. Moreover, the second inequality is due to $A^i W_{k-i-1}(A^T )^i \succeq 0$ and $A^k X_0 (A^T )^k \succeq 0$. Next, the maximum eigenvalue of $W_k$ is bounded as $\lambda_{\max } (W_k) \le W_{\max}$ for all $k \geq 0$, where $W_{\max} > 0$ is given in~\cref{lemma:bound-W}. This is because $\lambda_{\max } (W_k) \le {\rm tr}(W_k) = {\rm tr}({\mathbb E}[w_k w_k^T ]) = {\mathbb E}[{\rm tr}(w_k w_k^T )] = {\mathbb E}[w_k^T w_k ] \le W_{\max}$, where the second inequality comes from~\cref{lemma:bound-W}, and the first equality uses the fact that the trace is a linear function.
Therefore, one gets
\begin{align*}
{\mathbb E}[\|Q_k^{LU}  - Q^*\|_2^2 ]\le& \alpha ^2  W_{\max } n^2 \sum_{i = 0}^{k - 1} {\| {A^i } \|_\infty ^2 }  + n^2\lambda_{\max } (X_0 )\| {A^k } \|_\infty ^2\\
\le& \alpha ^2 W_{\max } n^2 \sum_{i = 0}^{k - 1} {\rho ^{2i} }  + n^2 \lambda_{\max } (X_0 )\rho ^{2k}\\
\le& \alpha ^2 W_{\max } n^2 \lim_{k \to \infty } \sum_{i = 0}^{k - 1} {\rho ^{2i} }  + n^2 \lambda_{\max } (X_0 )\rho ^{2k}\\
\le& \frac{{\alpha ^2 W_{\max } n^2}}{{1 - \rho ^2 }} + n^2 \lambda_{\max } (X_0 )\rho ^{2k}\\
\le& \frac{{\alpha ^2 W_{\max } n^2}}{{1 - \rho }} + n^2 \lambda_{\max } (X_0 )\rho ^{2k}\\
\le& \frac{{\alpha W_{\max } n^2 }}{{d_{\min } (1 - \gamma )}} + n^2 \left\| {x_0 } \right\|_2^2 \rho ^{2k},
\end{align*}
where the first inequality is due to $\left\|  \cdot  \right\|_2  \le \sqrt n \left\|  \cdot  \right\|_\infty$, the second inequality is due to~\cref{lemma:max-norm-system-matrix}, the third and fourth inequalities come from $\rho \in (0,1)$, the last inequality is due to $\lambda_{\max } (X_0) \le {\rm tr}(X_0 ) = {\rm tr}(x_0 x_0^T ) = \left\| {x_0 } \right\|_2^2$ and $\rho = 1-\alpha d_{\min}(1-\gamma)$. Taking the square root on both side of the last inequality, using the subadditivity of the square root function, the Jensen inequality, and the concavity of the square root function, we have the desired conclusion.

\end{document}